\newtheorem*{lemA.1}{Lemma A.1}
\newtheorem*{lemA.2}{Lemma A.2}
\newtheorem*{lemA.3}{Lemma A.3}
\newtheorem{theorem}{Theorem}
\newtheorem{lemma}{Lemma}
\newtheorem{definition}{Definition}
\newtheorem{claim}{Claim}
\newtheorem{problem}{Problem}
\newcommand{\color}[3]{}
\newcommand{\op}[1]{\mathsf{#1}}
\newcommand{\nth}[1]{$#1^{\textup{th}}$}
\begin{document}
\title{Quantum Computation with Coherent Spin States and the Close Hadamard Problem}
    \author{Mark R. A. Adcock,$^{1}$ Peter H\o yer,$^{1,2}$ and Barry C. Sanders$^{1,3}$
   }
 \affiliation{$^{1}$Institute for Quantum Science and Technology,
    University of Calgary, Calgary, Alberta, Canada, T2N 1N4. Email: \texttt{mkadcock@qis.ucalgary.ca}\\
 $^{2}$Department of Computer Science,University of Calgary, 2500 University Drive N.W., Calgary,
Alberta, Canada, T2N 1N4. Email: \texttt{hoyer@ucalgary.ca}\\
 $^{3}$Program in Quantum Information Science, Canadian Institute for Advanced Research, Toronto, Ontario M5G 1Z8,
 Canada
 }
\begin{abstract}

We study a model of quantum computation based on the
continuously-parameterized yet finite-dimensional Hilbert space of
a spin system. We explore the computational powers of this model
by analyzing a pilot problem we refer to as the close Hadamard
problem. We prove that the close Hadamard problem can be solved in
the spin system model with arbitrarily small error probability in
a constant number of oracle queries. We conclude that this model
of quantum computation is suitable for solving certain types of
problems. The model is effective for problems where symmetries
between the structure of the information associated with the
problem and the structure of the unitary operators employed in the
quantum algorithm can be exploited.

\end{abstract}
\pacs{03.67.Ac} \maketitle

\section{Introduction}

Ever since the remarkable discovery that quantum mechanical
systems can in principle be used for computational purposes by
pioneers such as Benioff \cite{Be82}, Feynman \cite{Fe82}, and
Deutsch \cite{De85}, researchers have considered the feasibilities
of implementing such computations in a rich variety of physical
models. These models include phenomena such as light, electrons,
atomic nuclei each of which have degrees of freedom such as
position, path or spin into which quantum information can be
encoded, processed and measured. The measurement spectrum can be
discrete, for example measuring spin up or down or position being
left or right, or continuous, for example measuring the angle of
the spin axis or where on a line the particle is located.

This distinction between whether the spectrum of measurement, or
preparation, is discrete or continuous is at the heart of the
difference between discrete variable and continuous variable
versions of quantum information~\cite{Ad12}. Continuous variable
studies are most often linked to harmonic oscillators because
quantum optics has powerful tools to prepare, process and measure
optical field modes~\cite{Le97}, which are analogous to harmonic
oscillators.

The harmonic oscillator is a continuously-parameterized system
having an infinite-dimensional Hilbert space. Continuous variable
quantum computation based on the harmonic oscillator may be
thought of as a model of quantum computation. The circuit model
was the first such model~\cite{De89}, and the circuit and one-way
quantum models are examples of different models~\cite{Ne06}, which
are distinguished by different processing methods. Since
continuous-variable quantum information may be defined in both
finite- and infinite- dimensional Hilbert spaces~\cite{Ad12}, we
are inspired to investigate computational models in continuously
parameterized yet finite-dimensional Hilbert spaces.

Here we study a model of quantum computation based on the
continuously-parameterized, finite-dimensional Hilbert space of a
spin system~\cite{Pe72,Ar72,Ra71,KU93}. We are ultimately
interested in characterizing the computational powers of this
system. We initiate this investigation by proposing and analyzing
a pilot problem we refer to as the close Hadamard problem. This
problem has in particular two advantages: Firstly, this problem
seems suitable for solution within a finite-dimensional model.
Secondly, since variations of this problem have been studied in
other models, it provides a mean to compare this model with other
models of quantum computing.

The close Hadamard problem oracle is an oracle decision problem
~\cite{AHS09,AHS12}, related to the digital coding techniques
employed in classical communications~\cite{MS77,Hor07}. It is also
a special case of what is sometimes referred to as the
bounded-distance decoding problem~\cite{BV93,MM99}. In order to
highlight features of the spin model, we differentiate between two
versions of the close Hadamard problem, which we refer to as the
restricted and the unrestricted versions.

We introduce a new algorithm and prove that both versions of the
problem can be solved in the spin system model with arbitrarily
small error probability in a constant number of oracle queries.
This performance is comparable to that of the Bernstein--Vazirani
algorithm~\cite{BV93,MM99} adapted to this problem in the circuit
model of quantum computation. The observation of comparable
performance does not imply a win over the Bernstein--Vazirani
algorithm in the circuit model. That the performance is comparable
in this new model of computation in and of itself warrants further
exploration of the model.

Our investigation of the pilot problem suggests that this model
can be used to exploit symmetries between the structure of the
information associated with the problem and the structure of the
unitary operators employed in the quantum algorithm. The improved
efficiency in the restricted case over the unrestricted case of
the pilot problem is due to error cancellation that results from
employing a symmetric superposition of spin states as algorithm
input and from the combination of the group structure of the
Hadamard codewords and the employment of Hadamard operators in the
algorithm.

The tolerance of errors in this case is a direct result of error
cancellation that results from this combination. We further
demonstrate that this pairing between operators and the codewords
offers the promise of discovery of new problems by giving a sketch
of another problem that can be efficiently solved in this way. We
conclude that the continuously-parameterized representation of
quantum dynamical systems having a finite-dimensional Hilbert
space gives us a model of quantum computation that is suitable for
certain types of problems and is worthy of further exploration.

Our paper is presented as follows. In Sec.\ II, we give formal
definitions of oracle decision problems and of the unrestricted
and the restricted versions of the close Hadamard problem. In
Sec.\ III, we introduce the spin system model and spin
squeezing~\cite{KU93}. We show that for a particular coherent spin
state, the limiting squeezed state is asymptotically approximated
by a symmetric superposition of two discrete states with constant
error independent of the size of the Hilbert space.

In Sec.\ IV, we prove that our algorithm solves the close Hadamard
problem with arbitrarily small error probability in a constant
number of oracle queries independent of the size of the problem.
In Sec.\ V, we discuss generalization of the computational model
by showing that if the Hadamard operator is replaced by the
discrete Fourier transformation, the oracle decision problem
changes. We conclude in Sec.\ VI.

\section{Oracle Decision Problems and the Close Hadamard Problem}

Quantum algorithms for the efficient solution of oracle decision
problems are of historical importance in quantum information
~\cite{De85,DJ92}. The Deutsch--Jozsa problem in particular has
been studied in both discrete and continuous variable settings
~\cite{DJ92,BP03,AHS09,AHS12}. We are inspired by the continuous
variable quantum algorithm used to solve the Deutsch--Jozsa
problem, where logical states are a continuously-parameterized in
a \emph{infinite}-dimensional Hilbert space. Here we explore this
continuous variable quantum algorithm where the logical states are
continuously-parameterized in the \emph{finite}-dimensional
Hilbert space of a spin system.

This continuously-parameterized spin model of quantum computation
naturally yields to a symmetric superposition of two basis states
as the logical state. Returning to a discrete representation
inspires us to discover a new oracle decision problem, which we
refer to as the close Hadamard problem.

\subsection{Oracle Decision Problems}

Oracle decision problems are related to oracle identification
problems, which are usually presented in terms of the problem of
identifying a unique function $f$. The function $f$ maps
$N=2^n$-bit strings to a single bit
\begin{align}\label{classical_gl1}
f:\{0,1\}^n\mapsto\{0,1\}.
\end{align} Any Boolean function on $n$ bits can also be represented by a
string of $N=2^n$ bits, in which the $i^{\rm{th}}$ bit $z _i$ is
the value of the function on the $i^{\rm{th}}$ bit string, taken
in lexicographical order. The challenge of the oracle
identification problem is to identify a unique $N$-bit string from
a set of size $2^N$ by making the fewest queries to an oracle.

Oracle decision problems are simpler than oracle identification
problems because the function or string does not have to be
identified explicitly. Rather, the problem is to identify which of
two mutually disjoint sets contains the string. For our analysis,
the oracle decision problem is defined as follows.
\begin{definition}\label{def:1}
An oracle decision problem is specified by two non-empty, disjoint
subsets $A,B \subset \{0,1\}^N$. Given a string $z \in A\cup B=C$,
the oracle-decision problem is to determine whether $z\in A$ or
$z\in B$ with the fewest queries to the oracle possible.
\end{definition}

\subsection{The Close Hadamard Problem}

In this subsection, we specify the particular sets $A$,~$B$
and~$C$ required by Definition 1 for the close Hadamard problem.
We are interested in strings that are close in the sense of
Hamming distance to the $N=2^n$-bit strings referred to as
Hadamard codewords~\cite{Hal69,MS77}.

The problem of discriminating between codewords received after
transmission over a noisy channel is well-known in classical
digital coding theory employing linear block codes~\cite{ML85}.
Linear block codes are characterized by the triplet $[N,k,t]$,
where $N$ is the total length of the codeword; $k<N$ is the amount
of information coded, and $t-1$ is the number of errors that the
code can correct.

The Hadamard code is a linear block code with $N=2^n$, $k=n+1$,
and $t=N/4-1$. The Hadamard code has a poor information rate
$k/N$, but it has excellent error-correcting capability. Because
of this latter feature, the $[32,6,7]$ Hadamard code was used to
encode picture information on Mariner space craft missions
\cite{Hal69}.

For $N=2^n$ and $i,j\in \{0,1\}^n$, the matrix of Hadamard
codewords is defined as
\begin{align}
\op{W}^{(N)}=\left[w_{ij}=i \cdot j\right], \label{Sec1EqnLogH}
\end{align}
where $i \cdot j$ is the$\mod{2}$ bit-wise dot product between the
the matrix indices  $i$ and $j$. For the specific $N =4$ case
where $i=j=11$, $i \cdot j=1+1=0 \mod 2$. Similarly, the other 15
bit-wise dot products give
\begin{align}
\op{W}^{(4)}=\left(\begin{array}{cccc}
0 &   0    &      0     & 0\\
0 &   1    &      0     & 1 \\
0 &   0    &      1     & 1\\
0 &   1    &      1     & 0\\
\end{array} \right).\label{eqn:HadamardCodeW4}
\end{align} For $j\in \mathbb{Z}_N$, the $j^{\rm th}$ Hadamard codeword
corresponds to the $j^{\rm th}$ row of the matrix $\op{W}^{(N)}$
and is expressed as $W^{(N)}_j$. For example $W^{(4)}_3=0110$.

All Hadamard codewords are balanced with the exception of
$W_0^{(N)}$, which is constant. Additionally, all $N$ Hadamard
codewords are separated from each other by Hamming distance
\begin{align}d\left(W^{(N)}_j,W^{(N)}_k\right)=N/2.
\end{align} An arbitrary string $z\in {\{0,1\}}^N$ having Hamming distance
$d\left(z,W^{(N)}_j\right)<N/4$ from any Hadamard codeword is said
to be within the $t-$error-correcting capability of the Hadamard
code~\cite{Hor07}.

In our analysis, we introduce Hadamard codewords with two types of
bit errors: unrestricted errors and restricted errors.
Unrestricted bit errors can occur at any of the~$N$ bit positions,
whereas restricted errors are limited to~$N/2$ specific bit
positions.

\subsubsection{Codewords with unrestricted errors}

The codewords having unrestricted errors are the strings having
Hamming distance~$d$ from any Hadamard codeword $W^{(N)}_j$. We
define the set of codewords with errors specified with respect to
any particular codeword through the use of an error syndrome,
which represents all the possible ways an error of $d$ bits can
occur.

The error syndrome for $d$ unrestricted errors is
\begin{align} U_d=\left\{z\in \{0,1\}^N \mid |z|=d\right\}.\end{align}
The set of codewords having $d$ unrestricted errors with respect
to the~\nth{j} codeword is
 \begin{align}\label{Eq:unrestrictederrorsOne}
 \Xi^{(N)}_{j,d}=\left\{ z \oplus W_j^{(N)}  \mid  z \in U_d
 \right\},
\end{align} and the set of all codewords with zero to
$N/16$ unrestricted errors is
\begin{align}
\Xi^{(N)}_{j,\,\mathbb{Z}_{N/16}}=&\left\{\Xi^{(N)}_{j,m}\mid m\in
\mathbb{Z}_{N/16}\right\}.\label{Eq:unrestrictederrorsAll}
\end{align} We proceed in a similar manner with the definition of the
 Hadamard codewords having restricted errors.

\subsubsection{Codewords with restricted errors}

The codewords having restricted errors are the strings with
Hamming distance~$d$ from Hadamard codeword $W^{(N)}_j$, but the
errors are restricted to the $N/2$ specific bit positions where
the codeword $W^{(N)}_{N-1}$ contains a one. The error syndrome
for $d$ \emph{restricted} errors is
\begin{align}
R_d=\left\{z\in \{0,1\}^N \mid |z|=d \text{ and } z \preccurlyeq
W^{(N)}_{N-1}\right\}.
\end{align}
We say that a vector $a \in \{0,1\}^N$ is dominated by a vector $b
\in \{0,1\}^N$, denoted $a \preccurlyeq b$, if whenever $a_i =1$
then also $b_i = 1$. The set of codewords having $d$ restricted
errors with respect to the~\nth{j} codeword is
 \begin{align}\label{Eq:restrictederrorsOne}
 \tilde{\Xi}^{(N)}_{j,d}= \left\{ z \oplus W_j^{(N)}  \mid  z \in
 R_d\right\}.
\end{align} We present two examples of the sets
given by Eq.~(\ref{Eq:restrictederrorsOne}).

For the $N=8$ case where there is a single restricted error, we
have $W^{(8)}_{7}=01101001$, and, for the particular codeword
$W^{(8)}_{4}=00001111$,
\begin{align}\label{eq:expd1string}
\tilde{\Xi}^{(8)}_{4,1}=&\left\{01001111,00101111,00000111,00001110\right\}.
\end{align}
Inspection of the set given in Eq.~(\ref{eq:expd1string}) reveals
the error alignment with the bit positions where $W^{(8)}_{7}=1$.

For $N=8$ where there are two restricted errors, the errors may
occur at any two of four possible possible bit positions
represented as~$\{a,b,c,d\}$, for which there are the ${4 \choose
2}=6$ distinct bit error pairings $\{ab,ac,ad,bc,bd,cd\}$. The
codewords with two errors in this case are
\begin{align}\label{eq:expd2string}
\tilde{\Xi}^{(8)}_{4,2}=&\{00000110,00100111,00101110,01000111, \nonumber\\
&01001110,01101111\}.
\end{align}For the general case, the set having $m$-tuple restricted errors
has size~$\left|\tilde{\Xi}^{(N)}_{j,m}\right|={N/2 \choose m}$.

The set of all correctable codewords with zero to $N/4$ restricted
errors with respect to the $j^{\rm th}$ Hadamard codeword is
\begin{align}
\tilde{\Xi}^{(N)}_{j,\,\mathbb{Z}_{N/4}}=&\left\{\tilde{\Xi}^{(N)}_{j,m}\mid
m\in \mathbb{Z}_{N/4}\right\}.\label{Eq:restrictederrorsAll}
\end{align} The size of this set is exponential in $N$ since
\begin{align}
\left|\tilde{\Xi}^{(N)}_{j,\,\mathbb{Z}_{N/4}}\right|=&\sum_{m=0}^{N/4-1}{\frac{N}{2} \choose m}\nonumber\\
=&\,\frac{1}{2}\left[2^{\frac{N}{2}}-{\frac{N}{2} \choose
\frac{N}{4}}\right]\label{Sec2SizeNeary}.
\end{align} We give two variations of the close Hadamard problem in terms of Definition \ref{def:1}.

\begin{problem}\sl\label{restrictedCHP}
Given the set of codewords
$\tilde{A}=\tilde{\Xi}^{(N)}_{N/2-1,\,\mathbb{Z}_{N/4}}$, which
contains strings that are close (in the restricted sense) to the
Hadamard codeword $W_{N/2-1}^{(N)}$ and the set of codewords
$\tilde{B}=\tilde{\Xi}^{(N)}_{k,\,\mathbb{Z}_{N/4}}$, which
contains strings that are close (in the restricted sense) to any
other Hadamard codeword $W_k^{(N)}$ with $k \in
\mathbb{Z}_{N/2-1}$ and a string $z$ randomly selected with
uniform distribution~$\mu$ such that $z\in_\mu \tilde{C}=\tilde{A}
\cup \tilde{B}$, the \textbf{restricted close Hadamard problem} is
to determine whether $z\in \tilde{A}$ or $z\in \tilde{B}$ with the
fewest oracle queries.
\end{problem}

In our formulation of Problem~\ref{restrictedCHP}, we have made a
technical assumption by setting $j=N/2-1$ in our definition of
set~$\tilde{A}$. We could have selected any other $j \in
\mathbb{Z}_{N/2}$ as long as we excluded the selection from the
definition of set~$\tilde{B}$. We make this assumption because our
quantum algorithm requires the measurement of some qubit. We have
arbitrarily, and without loss of generality, set it to the qubit
that corresponds to~$j=N/2-1$. The same assumption is made in our
formulation of Problem~\ref{unrestrictedCHP}.

\begin{problem}\sl\label{unrestrictedCHP}
Given the set of codewords
$A=\Xi^{(N)}_{N/2-1\,\mathbb{Z}_{N/16}}$, which contains strings
that are close (in the unrestricted sense) to the Hadamard
codeword $W_{N/2-1}^{(N)}$ and the set of codewords
$B=\Xi^{(N)}_{k,\,\mathbb{Z}_{N/16}}$, which contains strings that
are close (in the unrestricted sense) to any other Hadamard
codeword $W_k^{(N)}$ with $k \in \mathbb{Z}_{N/2-1}$ and a string
$z$ randomly selected with uniform distribution~$\mu$ such that
$z\in_\mu C=A \cup B$. The \textbf{unrestricted close Hadamard
problem} is to determine whether $z\in A$ or $z\in B$ with the
fewest oracle queries.
\end{problem}

\subsection{Quantum Algorithm for Oracle Decision Problems}

Fig.~\ref{fig:SVDJCircuit} is a graphical representation of the
operators applied to the solution of
\begin{figure}[tbp]
            \begin{center}
            \includegraphics[width=9 cm]{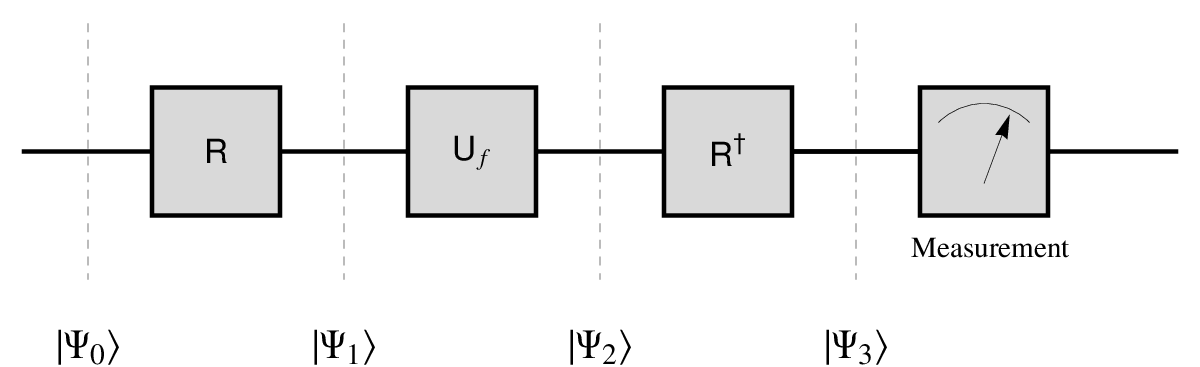}
            \end{center}
            \caption{Graphical representation of the operators for the solution of oracle decision problems
            in continuous variable
            settings~\cite{AHS09}. } \label{fig:SVDJCircuit}
\end{figure}
the Deutsch--Jozsa oracle decision problem employing logical
states encoded in the infinite-dimensional Hilbert space of the
harmonic oscillator~\cite{AHS09,AHS12}. In this case, $\Psi(x)$ is
a square-integrable function of the continuous position~$x$.

A key aspect of the approach is the physical accessibility of the
harmonic oscillator ground state and the availability of linear
and quadratic operators that enable us to prepare the logical
input state $\left|\Psi_0\right\rangle$. Additionally, the
operators $\op{R}$ and $\op{R}^\dag$ identified in
Fig.~\ref{fig:SVDJCircuit} are the easily implementable continuous
Fourier transform and its inverse. The oracle information is
encoded in the logical state by the unitary operator~\cite{AHS09}
\begin{equation} \op{U}_f=\left(\begin{array}{cccc}
(-1)^{z_1} &          0              &      \cdots     & 0      \\
      0            &   (-1)^{z_2}    &      \cdots     & 0      \\
     \vdots        &       \vdots            &      \ddots     & \vdots  \\
      0            &          0              &      \cdots           & (-1)^{z_N} \\
\end{array} \right)\label{Eq:Oracle_Uf},
\end{equation} where the $z_i$ are the bits of the unknown string given in oracle decision problem Definition 1.

Mapping this algorithm to the finite-dimensional,
continuously-parameterized coherent spin system must first deal
with the step that takes the ground state of the spin system to
the logical input state $\left|\Psi_0\right\rangle$. This first
step also employs physically accessible, linear spin rotation and
quadratic spin squeezing.


\section{The Spin System Model}

The method of generalized coherent states has been successfully
used to describe a number of diverse physical phenomena including
quantum optics, atom-light interactions, and
superfluidity~\cite{Pe72}. Here we make use of coherent spin
states~\cite{Ra71,Ar72} in creating an alternative model of
continuous variable quantum computation. Just as squeezing is
beneficial in continuous variable quantum computing using the
harmonic oscillator, we make use of spin squeezing
here~\cite{KU93}. We use the optimally squeezed spin
state~\cite{KU93} as input to our algorithm and show that it can
be approximated by a superposition of two discrete states with
constant error independent of the size of the Hilbert Space.

\subsection{Coherent Spin States}

Our spin system is a collection of $2S$ elementary $1/2$ spins.
$2S$ is an odd integer, and we choose $2S+1=N$ so that $N=2^n$-bit
strings may be naturally represented. We refer this as an $S$-spin
system~\cite{KU93}.

The system dynamics are determined by the Hamiltonian, which is
expressed as a polynomial of su(2) algebraic elements. These
algebraic elements are Pauli spin operators in the spin-1/2
single-particle case. For higher even dimensions, we use notation
similar to~\cite{KU93} with operators $\hat{\op{S}}_i$,
$\hat{\op{S}}_j$ and $\hat{\op{S}}_k$ and $i,j,k$ denoting the
components of any three orthogonal directions, such that
\begin{align}\label{Eq:SpinSysDef1}\left[\hat{\op{S}}_i,\hat{\op{S}}_j\right]&={\rm{i}}\hat{\op{S}}_k,
\end{align}
and
\begin{align}\label{Eq:SpinSysDef1a}
\Delta \hat{\op{S}}^2_i \Delta \hat{\op{S}}^2_j &\geq
\frac{1}{4}\left\langle \hat{\op{S}}_k\right\rangle^2,
\end{align} and cyclic permutations.

The spin system is oriented in the usual way. With
\begin{align}m\in
\left\{-s,-s+1,-s+2,\ldots,s\right\},\end{align}
 the spin kets $\left|m\right\rangle_s$ are eigenstates of $\hat{\op{S}}_z$ and
$\op{S}^2$ satisfying
\begin{align}\label{Eq:SpinSysDef2}
\hat{\op{S}}_z\left|m\right\rangle_s&=m\left|m\right\rangle_s,\end{align}
and
\begin{align}\op{S}^2\left|m\right\rangle_s&=s(s+1)\left|m\right\rangle_s,
\end{align} where $\op{S}^2=\hat{\op{S}}_x^2+\hat{\op{S}}_y^2+\hat{\op{S}}_z^2$.
The ladder operators  are $\hat{\op{S}}_\pm=\hat{\op{S}}_x\pm{\rm
i}\hat{\op{S}}_y$, and the action of the lowering operator on the
ground state is
\begin{align}\label{Eq:SpinSysDef4}
\hat{\op{S}}_-\left|-s\right\rangle_s=0.
\end{align} We use the discrete spin states to construct the continuously parameterized coherent spin states.

The harmonic oscillator coherent states are translations of the
oscillator ground state~\cite{Le97}. Analogously, the coherent
spin states are rotations of the spin system ground
state~\cite{KU93,Ar72,Pe72}. Individual spin states are often
referred to in the literature \cite{Ar72,Pe72} as Dicke states
analogous to photon number states, and the coherent spin states
are referred to as Bloch states analogous to Glauber states.

The coherent spin state,~$\left|\theta,\phi\right\rangle_{s}$ with
~$\theta,\phi \in \mathbb{R}$, is~\cite{KU93}:
\begin{align}\label{Eq:CSS_Rotation_Op}
\left|\theta,\phi\right\rangle_{s}&=\op{R}_{\theta,\phi}\left|-s\right\rangle_s
\nonumber\\
&=\left(1+\tan^2\frac{\theta}{2}\right)^{-s}\times\nonumber\\
&\phantom{=}\,\sum_{k=0}^{2s}{2s\choose
k}^{\frac{1}{2}}\left(e^{{\rm{i}}\phi}\tan{\frac{\theta}{2}}\right)^k\left|s-k\right\rangle_s.
\end{align} The coherent spin state of most interest is
\begin{align}
\left|\pi/2,0\right\rangle_{s}=2^{-s}&\sum_{k=0}^{2s}{2s\choose
k}^{\frac{1}{2}}\left|s-k\right\rangle_s,\label{Eq:Sec3CSS}
\end{align}
which has a Dicke-state amplitude spectrum whose squared magnitude
is the binomial probability distribution with $p=q=1/2$ shown in
Fig.~\ref{fig:SVDJFig2QPlots}b.

Quasi-probability distributions~\cite{Pe72} are a useful means of
visualizing spin states. We choose to use Q-functions~\cite{Pe72}
as coherent-state representations. The spherical plots of these
distributions provide good intuition as to the orientation and the
isotropic or anisotropic distribution of uncertainties, but they
are not used to actually calculate uncertainties, which is
performed using Eq.~(\ref{Eq:SpinSysDef1a}).

For the arbitrary coherent spin state represented as
$\left|\Psi\right\rangle=\sum_{k=0}^{2s}
\alpha_{k}\left|k\right\rangle$, we express the spherical
Q-function \cite{Le97} as
\begin{align}
Q(\theta,\phi)=&\sum_{k=0}^{2s}{2s\choose
k}^{\frac{1}{2}}\sin(\theta/2)^k \cos(\theta/2)^{2s-k}\alpha_{k}
e^{{\rm{i}} k \phi}.\label{Sec3QF}
\end{align}
\begin{figure}[tbp]
            \begin{center}
            \includegraphics[width=9 cm]{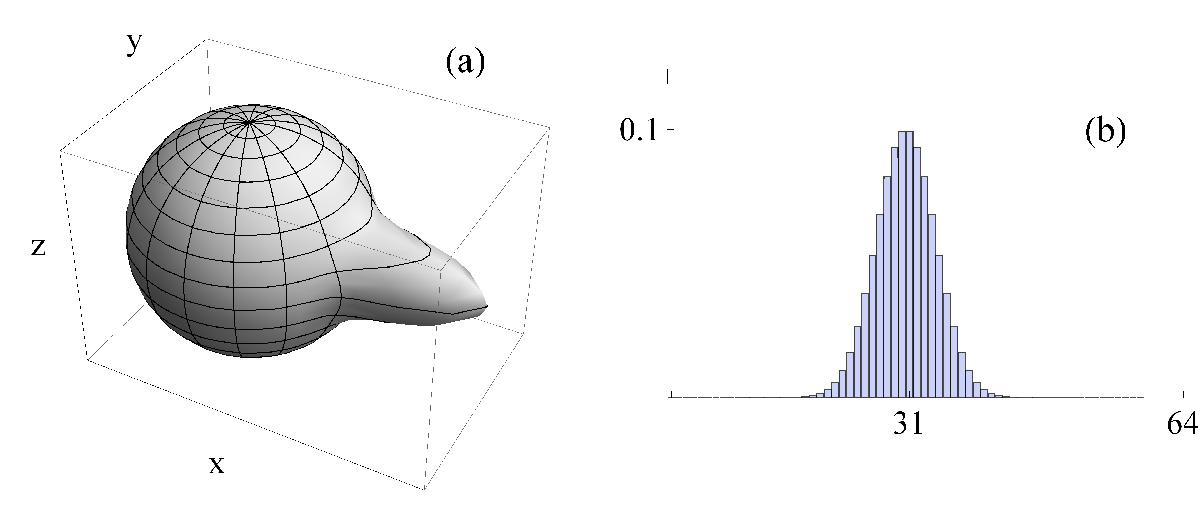}
            \end{center}
            \caption{(a) Spherical
            Q-function of the state given by
Eq.~(\ref{Eq:Sec3CSS}) for $s=\frac{63}{2}$, and (b) Plot of the
respective Dicke-state probability distribution.}
\label{fig:SVDJFig2QPlots}
\end{figure}In Fig.~\ref{fig:SVDJFig2QPlots}, we plot the spherical Q-function
and the probability distribution of the discrete spin state for
the state given in Eq.~(\ref{Eq:Sec3CSS}). Note that this coherent
spin state appears as an `equatorial' state with isotropic
uncertainty distribution when represented this way.

\subsection{Squeezed Spin States}
Coherent spin states can be squeezed \cite{KU93}. Whereas the
Glauber states can be squeezed to an arbitrary degree, spin states
can only be squeezed to the Heisenberg limit of $1/2$~\cite{KU93}.
We wish to exploit the squeezed state with the minimal achievable
variance in our algorithm. In the following, we formulate
expressions for this optimally squeezed spin state and show that
it can be approximated well by a superposition of two spin states.

We employ two-axis counter-twisting \cite{KU93} to define the
squeezing operator
\begin{align}
\op{S}_\mu=&e^{\rm{i} \frac{\pi}{4}\hat{\op{S}}_x}e^{\rm{i}\mu
\left(\hat{\op{S}}^2_z-
\hat{\op{S}}^2_y\right)},\label{eq:ShatSqOp}
\end{align} where $\mu$ is the squeezing
parameter~\cite{KU93}. The rotation operator $e^{\rm{i}
\frac{\pi}{4}\hat{\op{S}}_x}$ orients the resulting anisotropic
uncertainty distribution in the $y,z$ directions.

Applying the operator $\op{S}_\mu$ to
\begin{align}\left|\Psi\right\rangle=\left|\pi/2,0\right\rangle_{s}\label{Eq:SimplePsi}
\end{align} allows us to reduce the variance $\Delta \hat{\op{S}}^2_z$ at the expense of enhancing the variance $\Delta
\hat{\op{S}}^2_y $. The reduced variance may be expressed as
\begin{align}
V_{-}= \langle \hat{\op{S}}^2_z \rangle= \left\langle\Psi\right|
\op{S}^\dag_\mu\hat{ \op{S}}^2_z \op{S}_\mu
\left|\Psi\right\rangle \label{Eq.Vminus}
\end{align}since the first moment $\langle \hat{\op{S}}_z \rangle=0$. In Fig. \ref{fig:SVDJFig3SqueezedQPlots}a,
we plot the quasi-probability distribution of a squeezed spin
state. The reduced variance of the squeezed state in the $z$
direction and increased variance in the $y$ direction is evident.
\begin{figure}[tbp]
            \begin{center}
            \includegraphics[width=9 cm]{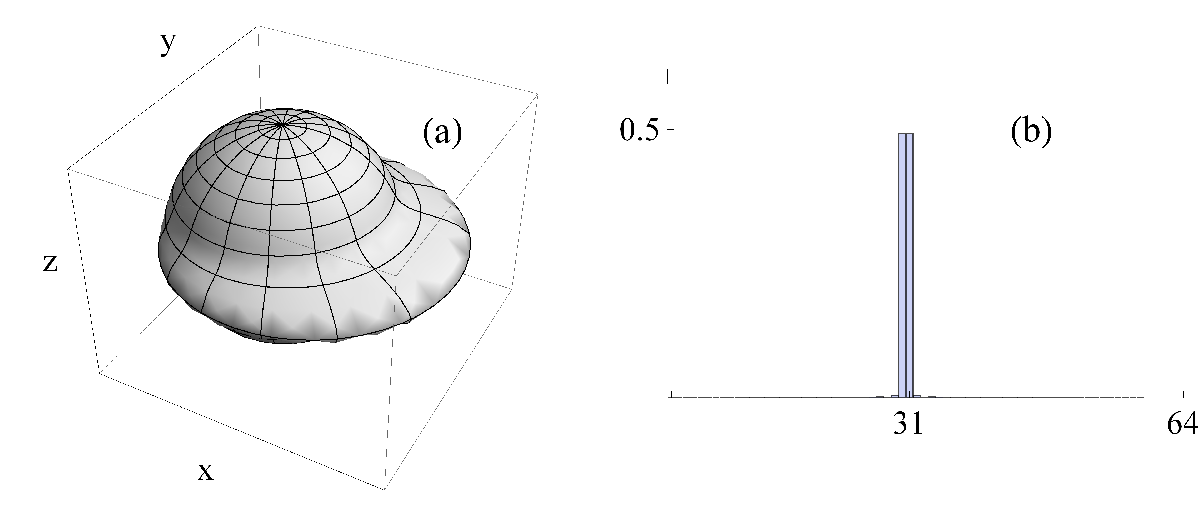}
            \end{center}
            \caption{(a) Spherical Q-function of the
            squeezed state given by Eq.~(\ref{Eq:PhiN_1}) for $s=\frac{63}{2}$, and (b)
            Plot of the respective Dicke-state probability distribution.
            } \label{fig:SVDJFig3SqueezedQPlots}
\end{figure}

The minimum value of the reduced variance $V_{-}$ asymptotically
approaches 1/2 with increasing $s$ \cite{KU93}. We refer to the
optimal value of the squeezing parameter at this minimum as
$\mu_{\text{opt}}$. For $\mu>\mu_{\text{opt}}$, the distribution
variance increases and the distribution quasi-probability
distribution becomes skewed ~\cite{KU93}. It can shown be that
$\mu_{\text{opt}} \rightarrow \frac{1}{s}$ as
$s\rightarrow\infty$. This limit is understandable since the
variance of a binomial distribution with $p=q=1/2$ is $N/4$, and
squeezing simply has the effect of removing the distribution
variance of the dependency on $N=2s+1$.

We express the optimally squeezed spin state as
\begin{align}\label{Eq:PhiN_1}
\left|\Phi^{(N)}\right\rangle&=\left|\pi/2,0,\mu_{\text{opt}}\right\rangle_s\nonumber\\
&=\op{S}_{\mu_{\text{opt}}}\left|\Psi\right\rangle,
\end{align} with $\left|\Psi\right\rangle$ defined in Eq.~(\ref{Eq:SimplePsi}).
In Fig. \ref{fig:SVDJFig3SqueezedQPlots}b, we plot the Dicke-state
probability distribution of the optimally squeezed state. It is
evident that this state approximates the superposition of two spin
states. We wish to provide a bound on how well approximated the
squeezed state is by a two-component superposition.

Analysis of the variance of the squeezed state's probability
distribution is facilitated using the qudit representation rather
than the spin state representation. We represent this
$N$-dimensional squeezed state in terms of the qudits $|i\rangle$
as
\begin{align}\label{Eq:PhiOutAmpDef}
\left|\Phi^{(N)}\right\rangle=\sum_{i=0}^{N-1} \alpha_i |i\rangle.
\end{align}
The probability distribution associated with
$\left|\Phi^{(N)}\right\rangle$ may be represented as the set
\begin{align}\label{Eq:PhiOutProbDef}
\mathcal{P}^{(N)}=\{\left|\alpha_0\right|^2,\ldots,\left|\alpha_i\right|^2,\ldots,\left|\alpha_{N-1}\right|^2\},
\end{align}
with individual probabilities
$\mathcal{P}^{(N)}_i=\left|\alpha_i\right|^2$. We note that the
squeezed state is symmetric about the centre, and the two central
states have
\begin{align}\label{Eq:ProbCentreComp}
\mathcal{P}^{(N)}_{(N/2-1)}&=\mathcal{P}^{(N)}_{(N/2)}=P_c^{(N)},
\end{align} and thereby form the principle
components of the probability distribution of optimally squeezed
states.

For $s>\frac{3}{2}$, the expression for the reduced variance given
by Eq.~(\ref{Eq.Vminus}) requires solving eigenvalue problems of
degree greater than eight and is no longer analytic, and we must
resort to numerical analysis. For $s=3/2$ ($N=4$), the expression
for the reduced variance given by Eq.~(\ref{Eq.Vminus}) is
analytic, and $\mu_{\text{opt}}=\frac{\pi}{6\sqrt{3}}$. We
represent this optimal squeezed state as
\begin{align}\label{Eq:Optimal2SEquals3}
\left|\Phi^{(4)}\right\rangle=e^{\rm{i}\phi}\left(0|0\rangle+\frac{1}
{\sqrt{2}}|1\rangle+\frac{1}{\sqrt{2}}|2\rangle+0|3\rangle\right),
\end{align} where $e^{\rm{i}\phi}$ is a global phase picked up by
the action of $\op{S}_\mu$. The associated probability
distribution is
\begin{align}
\mathcal{P}^{(4)}=\left\{0, 1/2,1/2,0\right\}.
\end{align} For this case we achieve
what we refer to as `perfect' squeezing, where `perfect' means
that the two central components have probability equal to a half,
and the probability of the other two components is zero.

However, this four-component distribution has a variance of only a
quarter, where the distribution variance is expressed as
\begin{align}\label{Eq:PhiOutProbDistVar}
\text{Var}\left[\mathcal{P}^{(N)}\right]=\sum_{i=0}^{N-1} i^2
\left|\alpha_i\right|^2-\left(\sum_{i=0}^{N-1}i
\left|\alpha_i\right|^2\right)^2.
\end{align} Indeed, for all $N$ if $\mathcal{P}_c^{(N)}=1/2$ then,
\begin{align}
\text{Var}\left[\mathcal{P}^{(N)}\right]=&\frac{1}{2}\left((N/2-1)^2+(N/2)^2\right)-\frac{1}{4}\left(N-1\right)^2\nonumber\\
=&\frac{1}{4}.
\end{align} Since the distribution variance approaches 1/2 as $N=2s+1$ approaches
infinity, perfect squeezing in the sense we have defined is not
possible. We use the variance equals 1/2 as a means to bound
$\mathcal{P}_c^{(N)}$ defined in Eq.~(\ref{Eq:ProbCentreComp}). In
Fig.~\ref{Fig:VarianceandProb}a,
\begin{figure}[tbp]
            \begin{center}
            \includegraphics[height=8 cm]{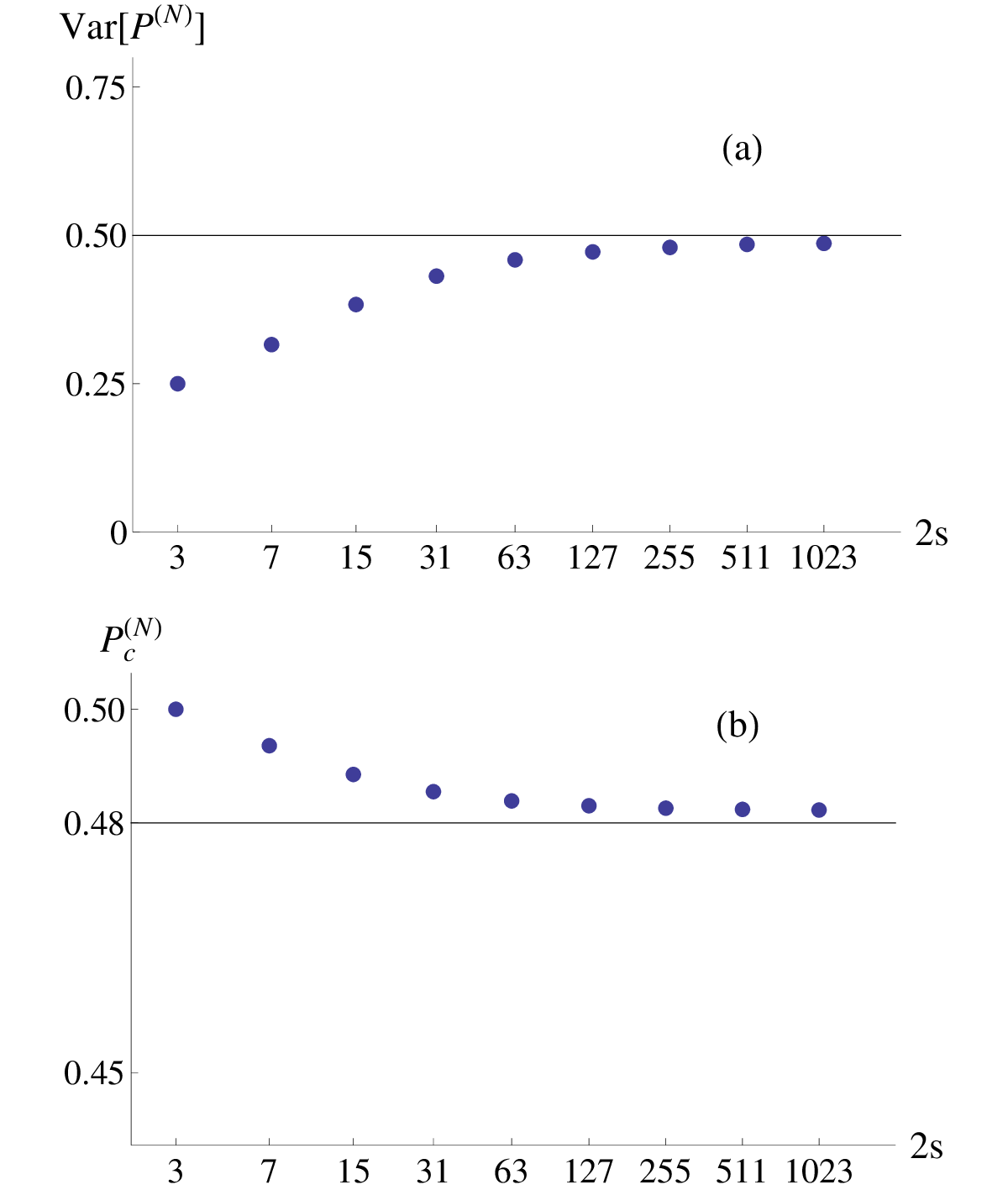}
            \end{center}
            \caption{(a) Calculated value of the reduced variance of the
probability distribution given by Eq.~(\ref{Eq.Vminus}) approaches
$1/2$ with increasing $s$ as predicted~\cite{KU93}. (b) Calculated
value of the probability of the two central components given by
Eq.~(\ref{Eq:ProbCentreComp}) is bounded by the constant given by
Eq.~(\ref{Eq:CentralProbBound}).} \label{Fig:VarianceandProb}
\end{figure}
we plot the calculated values of the reduced variance given by
Eqs.~(\ref{Eq.Vminus}) and~(\ref{Eq:PhiOutProbDistVar}) as a
function of $s$ from $s=3/2$ to $s=1023/2$, where we observe that
the variance approaches $V=1/2$ as predicted. In order to bound
the limiting value of the two central components
$\mathcal{P}_c^{(N)}$, we bound the `tails' of the probability
distribution $\mathcal{P}^{(N)}$.

In Fig.~\ref{Fig:TheTails}, we plot histograms calculated from the
squeezed distribution,~$\mathcal{P}^{(N)}$, for several values of
$s$, where we have scaled the ordinate to reveal the structure of
the tail components. We see that the components immediately
adjacent to the central components have
\begin{align}
\mathcal{P}_{N/2+1}^{(N)}=\mathcal{P}_{N/2-2}^{(N)}\approx0,
\end{align} and further outlying terms tail off in an exponential-like fashion.
We thus introduce the following bounding probability distribution
\begin{align}\label{Eq:PhiOutProbDef_best}
\mathcal{P}_{\mathcal{B}}^{(N)}=&\left\{0,\ldots,0,\frac{\epsilon}{3},
\frac{2\epsilon}{3},0,\frac{1}{2}-\epsilon,\right.\\\nonumber
&\,\,\,\left.\frac{1}{2}-\epsilon,0,\frac{2\epsilon}{3},\frac{\epsilon}{3},0,\ldots,0\right\},
\end{align} in order to calculate a bound on the two central components of the distribution.

Solving
$\text{Var}\left[\mathcal{P}_{\mathcal{B}}^{(N)}\right]=1/2$ for
$\epsilon$ gives the probability of the two central components
\begin{align}\label{Eq:CentralProbBound}
\mathcal{P}_{\mathcal{B}_c}^{(N)}=\frac{1}{2}-\epsilon < 0.484.
\end{align} In Fig.~\ref{Fig:VarianceandProb}b, we plot the calculated values
of $\mathcal{P}_c^{(N)}$, where we note that it goes from 1/2 at
$s=3/2$ and asymptotically approaches the constant bounded from
below by Eq.~(\ref{Eq:CentralProbBound}). The bounding
distribution is also overlayed on the histograms presented in
Fig.~\ref{Fig:TheTails}.

With these concepts behind us, we state a theorem.

\begin{theorem}\label{CHPinSS}The close Hadamard problem, as defined by Problem~\ref{restrictedCHP}
and Problem~\ref{unrestrictedCHP}, can be solved in the spin
system model with arbitrarily small error probability in a
constant number of oracle queries.
\end{theorem}

\begin{figure}[tbp]
            \begin{center}
            \includegraphics[width=9 cm]{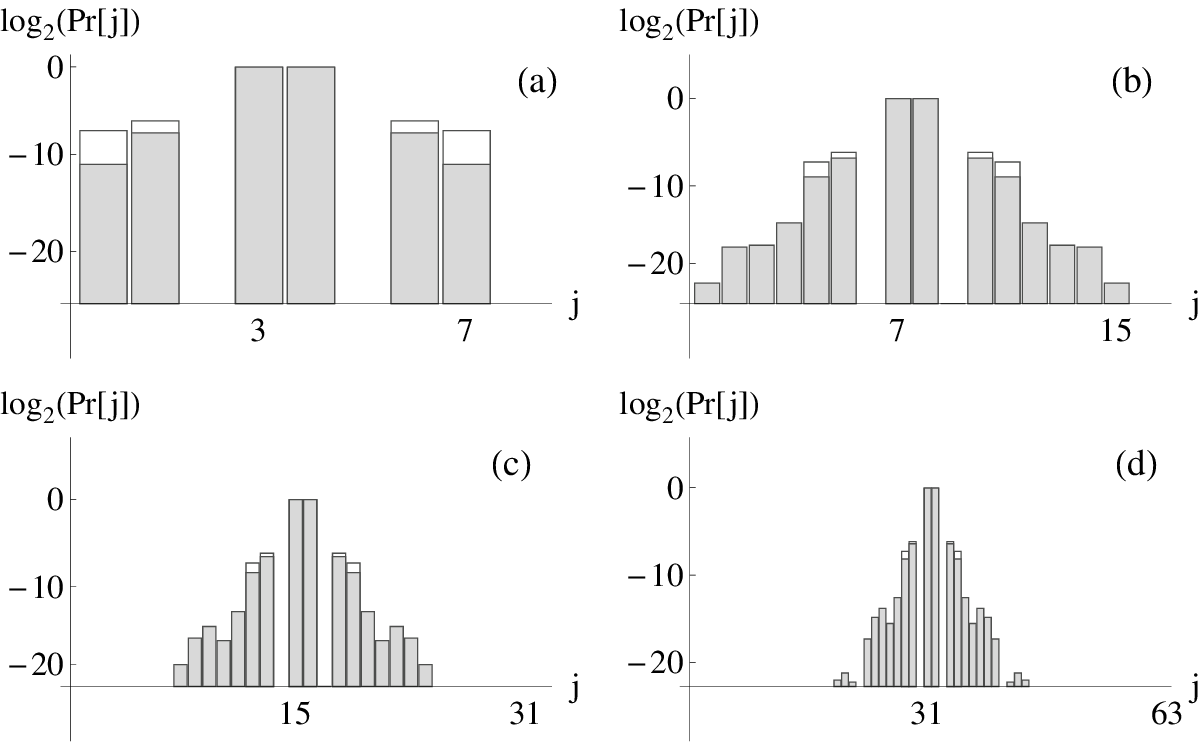}
            \end{center}
            \caption{Histograms of the  probability distribution $\mathcal{P}^{(N)}$
            for (a)~$s=7/2$, (b)~$s=15/2$, (c)~$s=31/2$ and (d)~$s=63/2$ with a logarithmic scale
            for the ordinate. The bounding distribution given by
            Eqs.~(\ref{Eq:PhiOutProbDef_best}) and~(\ref{Eq:CentralProbBound})
            is overlayed on each of the histograms.}
\label{Fig:TheTails}
\end{figure}

Here we briefly recap and summarize the overall approach before
proving the theorem in the next section. The spin system model
gives us the optimally squeezed state
$\left|\Phi^{(N)}\right\rangle$ given by
Eq.~(\ref{Eq:PhiOutAmpDef}), which forms the input state to the
algorithm operators graphically represented in
Fig.~\ref{fig:SVDJCircuit}. The input state
$\left|\Phi^{(N)}\right\rangle$ has an associated probability
distribution $\mathcal{P}^{(N)}$ given by
Eq.~(\ref{Eq:PhiOutProbDef}) and bounded by the distribution
$\mathcal{P}_{\mathcal{B}}^{(N)}$  given by
Eq.~(\ref{Eq:PhiOutProbDef_best}).

In order to facilitate analysis, we approximate the optimally
squeezed state with an idealized input state consisting of the
only the two central components. Since greater than $96\%$ of the
probability is manifest in the two central components of the
bounding distribution as given by Eq.~(\ref{Eq:CentralProbBound}),
we can replace the optimally squeezed state with the idealized
superposition of two states and fold the approximation into the
resulting single-query success probability.

This idealized input state can be expressed in a variety of ways.
In the qudit representation, it is written
\begin{align}\label{Eq:idealqudit}
\left|\Phi^{(N)}_{\text{\tiny{ideal}}}\right\rangle=&0|0\rangle+...+\frac{1}{\sqrt{2}}|N/2-1\rangle+\nonumber\\
&\frac{1}{\sqrt{2}}|N/2\rangle+...+0|N-1\rangle.
\end{align} In the spin representation with $s=\frac{N-1}{2}$, it is written
\begin{align}\label{Eq:idealspin}
\left|\Psi^{(s)}_{\text{\tiny{ideal}}}\right\rangle=&0|{-}s\rangle+0|{-s}+1\rangle+...+\frac{1}{\sqrt{2}}|{-}1/2\rangle+\nonumber\\
&\frac{1}{\sqrt{2}}|1/2\rangle+...+0|s-1\rangle+0|s\rangle.
\end{align}
Finally in shorthand spin representation, the idealized input
state is simply written
\begin{align}\label{Eq:idealized_input}
|\Psi_0\rangle=\frac{1}{\sqrt{2}}\left(\left|\frac{1}{2}\right\rangle_s
+\left|-\frac{1}{2}\right\rangle_s\right),\end{align}where we have
used the subscript zero to represent the algorithm input state in
Fig.~\ref{fig:SVDJCircuit}. Proof analysis proceeds by assuming we
have a quantum algorithm employing the operators
$\op{R}\op{U_f}\op{R}^\dag$ given in Fig.~\ref{fig:SVDJCircuit}
with $\op{R}=\op{R}^\dag=\op{H}^{\otimes n}$ acting on the
idealized input state given by Eq.~(\ref{Eq:idealized_input}).


\section{Algorithms for the Close Hadamard Problem }

In this section we prove Theorem~\ref{CHPinSS} demonstrating what
we have an algorithm that solves the close Hadamard problem in the
spin system model. We conclude this section with a discussion
bounding the performance of classical algorithms in the solution
of the close Hadamard problem.

\subsection{Bounding the Performance of the Close Hadamard Algorithm in the Spin System Model}

To facilitate analysis in the spin system case, we first use the
idealized state given by Eq~(\ref{Eq:idealized_input}), and then
reintroduce the effect of the probability distribution of the
optimally squeezed state at the end. We proceed by breaking the
proof of Theorem~\ref{CHPinSS} up into two claims one for the
restricted case and one for the unrestricted case.

We suppress the normalization factor~$\frac{1}{\sqrt{2}}$ and
express the idealized input state as
\begin{align}\label{Eq:AlgInput}
\left|\Psi_0\right\rangle=\left|\frac{1}{2}\right\rangle_s+
\left|-\frac{1}{2}\right\rangle_s.\end{align} We will suppress
this normalization factor in all equations having a simple
two-component superposition in order to make the equations easier
to read. The action of the algorithm on the input state is
expressed as
\begin{align}
\left|\Psi_3\right\rangle=&\op{H}^{\otimes
n}\op{U}_z\op{H}^{\otimes
n}\left|\Psi_0\right\rangle.\label{Sec2LmA1a}
\end{align} The $N$-bit string $z$ represents the function $f$.
We show that the algorithm efficiently solves both versions of the
close Hadamard problem.

\begin{claim}\label{restrictedclaim}The restricted close Hadamard problem can be solved with certainty
in a single oracle query using the quantum circuit given in
Fig.~\ref{fig:SVDJCircuit} and the idealized input state given by
Eq.~(\ref{Eq:AlgInput}).
\end{claim}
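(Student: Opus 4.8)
The plan is to compute the output state $\left|\Psi_3\right\rangle$ of Eq.~(\ref{Sec2LmA1a}) explicitly and show that, for restricted errors, it reduces to a fixed two-component superposition determined only by the underlying Hadamard index, so that a single measurement settles the decision.

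First I would reduce the oracle. A string in $\tilde C$ has the form $z = u \oplus W_j^{(N)}$ with $u \in R_m$, and since the $x$-th bit of $W_j^{(N)}$ is the mod-$2$ inner product $j\cdot x$, the diagonal oracle acts as $(-1)^{z_x}=(-1)^{j\cdot x}(-1)^{u_x}$. Expanding $\op{H}^{\otimes n}\op{U}_z\op{H}^{\otimes n}$, the amplitude of a basis state $\left|b\right\rangle$ produced from a single input $\left|a\right\rangle$ is $\frac1N\sum_x(-1)^{(a\oplus b\oplus j)\cdot x}(-1)^{u_x}$, a Walsh--Hadamard sum of the sign vector $(-1)^{u}$ evaluated at $c:=a\oplus b\oplus j$.

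The crux is a cancellation that exploits two structural facts at once: the two input components $a\in\{N/2-1,\,N/2\}$ of Eq.~(\ref{Eq:AlgInput}) differ by the all-ones string $N-1$, and a restricted error obeys $u\preccurlyeq W_{N-1}^{(N)}$, so $u_x=0$ whenever $\abs{x}$ is even because $W_{N-1}^{(N)}$ is precisely the parity string, $(-1)^{(N-1)\cdot x}=(-1)^{\abs{x}}$. Adding the contributions of $a=N/2-1$ and $a=N/2$ makes the two arguments complementary, $c_2=\bar c_1:=c_1\oplus(N-1)$, and using $(-1)^{\bar c_1\cdot x}=(-1)^{c_1\cdot x}(-1)^{\abs{x}}$ their combined coefficient acquires the factor $1+(-1)^{\abs{x}}$. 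This kills every odd-weight $x$; on the surviving even-weight $x$ the restricted error is absent, so $(-1)^{u_x}=1$ and the error drops out completely. Evaluating the remaining character sum, $\sum_{x:\,\abs{x}\text{ even}}(-1)^{c_1\cdot x}=\frac{N}{2}\left(\delta_{c_1,0}+\delta_{c_1,N-1}\right)$, leaves $\left|\Psi_3\right\rangle\propto\left|(N/2-1)\oplus j\right\rangle+\left|(N/2)\oplus j\right\rangle$, a fixed superposition independent of the particular error $u$.

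Finally I would read off the decision. For $j=N/2-1$ the output is $\frac{1}{\sqrt2}\left(\left|0\cdots0\right\rangle+\left|1\cdots1\right\rangle\right)$, whereas for any $j\in\mathbb{Z}_{N/2-1}$ it is $\frac{1}{\sqrt2}\left(\left|b_1\right\rangle+\left|\bar b_1\right\rangle\right)$ with $b_1=(N/2-1)\oplus j\notin\{0,N-1\}$. Hence a single computational-basis measurement of the output register --- testing whether the outcome is the all-zeros or the all-ones string --- distinguishes $z\in\tilde A$ from $z\in\tilde B$ with certainty, using one oracle query. The main obstacle is organizing the cancellation cleanly; once the factor $1+(-1)^{\abs{x}}$ is isolated, the error-independence and the collapse to two terms follow immediately. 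I note that the weight bound $m<N/4$ is not needed for correctness here (the error cancels regardless of its weight); it serves only to keep $\tilde A$ and $\tilde B$ disjoint, so that Problem~\ref{restrictedCHP} is well posed.
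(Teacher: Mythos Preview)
Your proof is correct and rests on the same cancellation mechanism the paper establishes in Lemmas~\ref{Lemma1} and~\ref{Lemma2}: the two input components differ by the all-ones string, so their combined amplitude at each $x$ carries the factor $1+(-1)^{\abs{x}}$, which annihilates exactly the odd-parity positions where restricted errors are allowed, leaving the error-free two-term superposition. The paper separates this into a no-error lemma plus an error-cancellation lemma and then applies a unitary $\op{U}_{2\mapsto1}$ before projecting onto $\left|s\right\rangle_s$, whereas you carry out the whole computation in one pass and read off the decision from a bare computational-basis measurement; both routes are valid, and your remark that the bound $m<N/4$ serves only to keep $\tilde A$ and $\tilde B$ disjoint (not for the cancellation itself) is correct.
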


\begin{claim}\label{unrestrictedclaim}The unrestricted close Hadamard problem can be solved with
arbitrarily small error probability in a constant number of oracle
queries using the quantum circuit given in
Fig.~\ref{fig:SVDJCircuit} and the idealized input state given by
Eq.~(\ref{Eq:AlgInput}).
\end{claim}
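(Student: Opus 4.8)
\emph{Approach.} The plan is to compute the output amplitudes of the single-query circuit $\left|\Psi_3\right\rangle=\op{H}^{\otimes n}\op{U}_z\op{H}^{\otimes n}\left|\Psi_0\right\rangle$ exactly via the Walsh--Hadamard transform, isolate the part of the error that fails to cancel, bound it using the error budget $m\in\mathbb{Z}_{N/16}$, and finally boost the resulting per-trial success probability to $1-\delta$ by a constant number of independent repetitions. Identifying the two central spin states $\left|\tfrac12\right\rangle_s,\left|-\tfrac12\right\rangle_s$ with the computational indices $N/2-1$ and $N/2$ (whose XOR is $N-1$), and using $(\op{H}^{\otimes n})_{ai}=\tfrac{1}{\sqrt N}(-1)^{a\cdot i}$ with $a\cdot i$ the bitwise inner product modulo $2$, I would write the oracle string as $z=W_k^{(N)}\oplus e$ and expand $\langle a|\Psi_3\rangle$. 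Everything collapses onto the transform of the error, $\hat e(c)=\sum_i e_i(-1)^{c\cdot i}$, yielding
\begin{equation}
\langle a|\Psi_3\rangle=\delta_{c',0}+\delta_{c',N-1}-\tfrac{4}{N}\,\hat g(c'),\qquad c'=k\oplus a\oplus(N/2-1),
\end{equation}
where $\hat g(c')=\sum_{i}e_i(-1)^{c'\cdot i}$ ranges only over even-weight positions $i$. In the restricted case $g=0$, recovering the exact two-peak state underlying Claim~\ref{restrictedclaim}; for Claim~\ref{unrestrictedclaim} the even-weight part of $e$ is the \emph{only} obstruction to exact cancellation.

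\emph{Bounding the perturbation.} Next I would bound the surviving term using $\abs{e}\le N/16-1$. Since $\abs{\hat g(c')}\le\abs{g}\le\abs{e}<N/16$, the factor $\tfrac{4}{N}\abs{\hat g}$ is strictly below $1/4$ everywhere. At the two true peaks ($c'=0$ and $c'=N-1$, where $\hat g=\abs{g}$) the amplitude is $1-\tfrac{4}{N}\abs{g}>3/4$, while at every other index the amplitude has magnitude at most $1/4$. Because $\op{H}^{\otimes n}$ is unitary and the input has squared norm $2$, normalizing gives probability exceeding $9/32$ at each true peak, hence probability exceeding $9/16>1/2$ of measuring one of the two peaks, and probability below $1/16$ on the complementary index pair belonging to any single wrong codeword.

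\emph{Decoding and amplification.} The two peak indices $k\oplus(N/2-1)$ and $k\oplus(N/2)$ are bitwise complements that decode to the \emph{same} $k$: from a measured outcome $a$ I recover $k$ as whichever of $a\oplus(N/2-1)$ and $a\oplus(N/2)$ has leading bit $0$ (well defined since every admissible $k\in\mathbb{Z}_{N/2}$ has leading bit $0$). Thus each trial returns the true $k$ with probability above $9/16$ and any fixed wrong value with probability below $1/16$. Repeating the one-query circuit $r$ times and outputting the most frequent decoded value, a Chernoff/Hoeffding bound gives failure probability at most $\exp(-2r(9/16-1/2)^2)=\exp(-r/128)$, so $r=\lceil 128\ln(1/\delta)\rceil$ queries---a number independent of $N$---drive the error below any target $\delta>0$. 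The decision is then ``$z\in\tilde A$'' if and only if the recovered $k$ equals $N/2-1$.

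\emph{Main obstacle.} The hard part, relative to the restricted case, is precisely that the error no longer cancels exactly: the even-weight component of $e$ survives and leaks amplitude off the two peaks. The crux is the quantitative budget $\abs{e}<N/16$, which is exactly what keeps each peak amplitude above $3/4$ and the per-trial success probability strictly above $1/2$. The delicate steps are then converting this slim margin into arbitrarily small error through a \emph{constant} number of repetitions, and verifying that no single wrong codeword can accumulate probability comparable to the true one---a fact controlled by $\abs{\hat g(c')}\le\abs{e}$---which is what legitimizes the majority-vote decoding.
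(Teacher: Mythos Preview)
Your proposal is correct and follows essentially the same line as the paper's proof: both compute the output amplitudes (you via an explicit Walsh--Hadamard transform, the paper via the balanced/constant-sum argument of Lemma~3), both arrive at peak amplitude $1-\tfrac{4\ell}{N}>\tfrac34$ and off-peak amplitude at most $\tfrac{4\ell}{N}<\tfrac14$ from the budget $\ell<N/16$, hence single-trial success probability exceeding $\tfrac{9}{16}$, and both amplify this above-$\tfrac12$ margin to arbitrarily small error via $O(\log(1/\delta))$ repetitions. The only cosmetic differences are that you decode the full index $k$ and majority-vote, whereas the paper applies $\op{U}_{2\mapsto1}$ and thresholds the frequency of the single outcome $\lvert s\rangle_s$; and a small notational slip---the sets in the unrestricted problem are $A,B$, not $\tilde A,\tilde B$.
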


A key feature of the input state is that it is a symmetric
superposition of two basis states. When Hadamard codewords are
encoded into the oracle, the action of the algorithm preserves the
symmetric superposition. This preservation is demonstrated in Fig.
\ref{Fig_for_Lemma1}.

\begin{figure}[tbp]
            \begin{center}
            \includegraphics[width=9 cm]{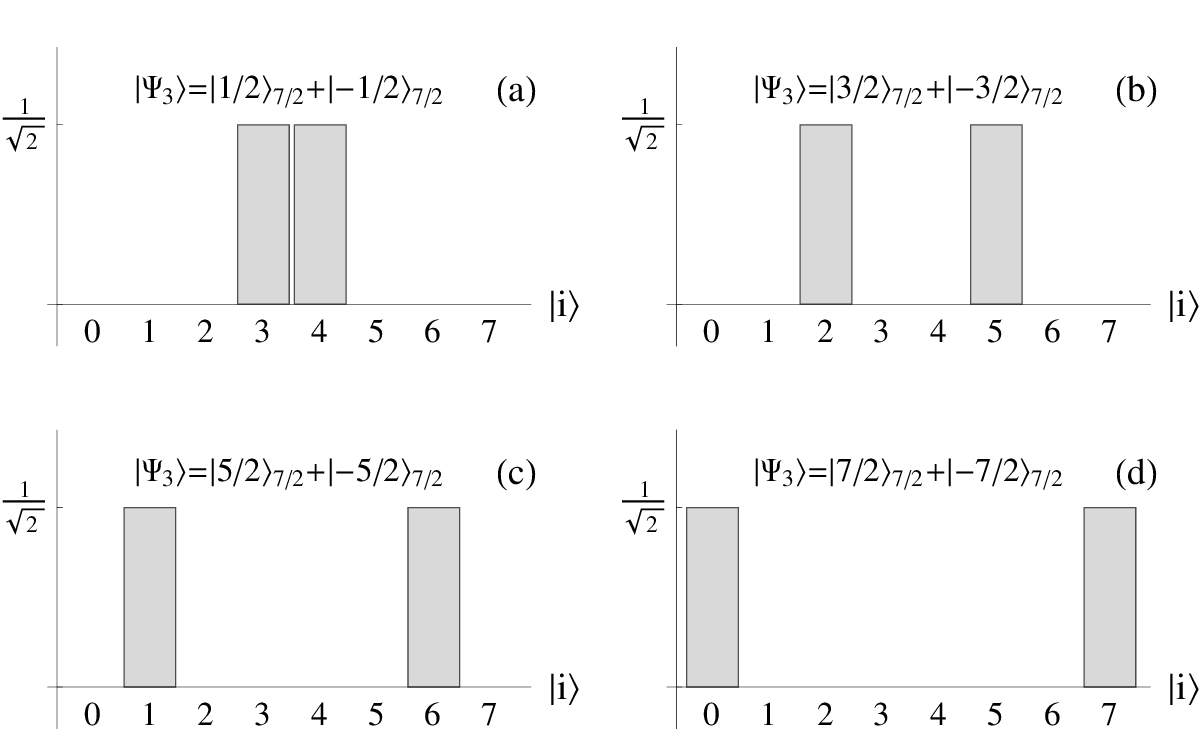}
            \end{center}
            \caption{For the example of $s=7/2$, $N=8$, the output state given by Eq.~(\ref{Sec2LmA1b})
            remains a symmetric superposition of two states for the Hadamard codewords: (a) $W^{(8)}_0$, (b)$W^{(8)}_1$,
             (c) $W^{(8)}_2$, and (d) $W^{(8)}_3$.}
\label{Fig_for_Lemma1}
\end{figure}

\begin{lemma}\sl\label{Lemma1}
Given the input
$\left|\Psi_0\right\rangle=\left|\frac{1}{2}\right\rangle_s+
\left|-\frac{1}{2}\right\rangle_s$ to the circuit shown in
Fig.~\ref{fig:SVDJCircuit} and the oracle encoded with one of the
Hadamard codewords $z= W_j^{(N)}$ for $N=2s+1$~and~$0\leq j <
\frac{N}{2}$, the output state is another superposition of spin
states
$\left|\Psi_3\right\rangle=\left|\frac{1}{2}+j\right\rangle_s+
\left|-\frac{1}{2}-j\right\rangle_s$.
\end{lemma}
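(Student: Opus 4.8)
The plan is to move everything into the computational (qudit) basis $\{|i\rangle : i\in\mathbb{Z}_N\}$ and exploit the fact that conjugating a Hadamard-codeword oracle by $\op{H}^{\otimes n}$ collapses to a single bit-flip permutation. First I would fix the correspondence between the spin kets $|m\rangle_s$ and the qudit labels $|i\rangle$. Reading off the Dicke-state expansion in Eq.~(\ref{Eq:CSS_Rotation_Op}), the amplitude of $|s-k\rangle_s$ is carried by the label $k$, so $|i\rangle \leftrightarrow |s-i\rangle_s$. Consequently the input $|\tfrac12\rangle_s+|-\tfrac12\rangle_s$ is the qudit state $|N/2-1\rangle+|N/2\rangle$, while the claimed output $|\tfrac12+j\rangle_s+|-\tfrac12-j\rangle_s$ is $|(N/2-1)-j\rangle+|(N/2)+j\rangle$. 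The statement is therefore equivalent to showing that $\op{H}^{\otimes n}\op{U}_{W_j^{(N)}}\op{H}^{\otimes n}$ sends $|N/2-1\rangle+|N/2\rangle$ to $|(N/2-1)-j\rangle+|(N/2)+j\rangle$.

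The conceptual core is the operator identity $\op{H}^{\otimes n}\op{U}_{W_j^{(N)}}\op{H}^{\otimes n}|y\rangle=|j\oplus y\rangle$. From the definition $\op{W}^{(N)}=\log_{(-1)}[\sqrt{N}\,\op{H}^{\otimes n}]$ in Eq.~(\ref{Sec1EqnLogH}), the $k$-th bit of the codeword $W_j^{(N)}$, read in lexicographic order, is the mod-$2$ bitwise inner product $j\cdot k$, so the oracle acts as $\op{U}_{W_j^{(N)}}|k\rangle=(-1)^{j\cdot k}|k\rangle$. Using $\langle k|\op{H}^{\otimes n}|y\rangle=N^{-1/2}(-1)^{k\cdot y}$ together with the bilinearity $k\cdot j+k\cdot y\equiv k\cdot(j\oplus y)\pmod 2$ gives $\op{U}_{W_j^{(N)}}\op{H}^{\otimes n}|y\rangle=\op{H}^{\otimes n}|j\oplus y\rangle$, and one more application of $\op{H}^{\otimes n}$ with $(\op{H}^{\otimes n})^2=\op{I}$ yields the identity. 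This is the step that does the real work: the Hadamard-conjugated phase oracle for a linear function is exactly XOR by the index $j$.

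Third, I would apply this XOR to the two input labels and simplify using the binary structure of the constants and the standing hypothesis $0\le j<N/2$, which forces the top ($2^{n-1}$) bit of $j$ to vanish. Since $N/2=2^{n-1}$ carries only that top bit, $j\oplus(N/2)=N/2+j$; since $N/2-1=2^{n-1}-1$ carries exactly the lower $n-1$ bits, XORing flips those bits of $j$ and returns the $(n-1)$-bit complement $(N/2-1)-j$. Both outputs lie in $\mathbb{Z}_N$ and are thus legitimate labels, and translating back through $|i\rangle\leftrightarrow|s-i\rangle_s$ recovers precisely $|\tfrac12+j\rangle_s+|-\tfrac12-j\rangle_s$, as claimed.

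I expect the only genuine obstacle to be bookkeeping rather than depth: pinning down the spin-to-qudit relabeling and keeping the two XOR computations straight, in particular the complement identity for $N/2-1$ and the essential role of the constraint $j<N/2$ (which is exactly why the technical restriction $j<\tfrac{N}{2}$ appears in the hypothesis). The operator identity of the second step, which carries the actual content, is a short and standard calculation once the inner-product description of the codewords is in hand.
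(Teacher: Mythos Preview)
Your argument is correct. The spin-to-qudit identification $|i\rangle\leftrightarrow|s-i\rangle_s$, the operator identity $\op{H}^{\otimes n}\op{U}_{W_j}\op{H}^{\otimes n}|y\rangle=|j\oplus y\rangle$, and the two XOR simplifications using $j<N/2$ are all sound, and the final relabeling matches the lemma exactly.

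The paper takes a more computational route. Instead of recognising the conjugated oracle as a global XOR permutation, it expands the output amplitude of each qudit $|y\rangle$ as a sum $\frac{1}{N}\sum_x\alpha_{x,y}$ with $\alpha_{x,y}=(-1)^{(z\oplus W_{y+N/2})_x}+(-1)^{(z\oplus W_{N/2-(y+1)})_x}$, and then invokes the balanced/constant dichotomy of Hadamard codewords (Eqs.~(\ref{eq:Lemma1a0Sum})--(\ref{eq:Lemma1a1Sum})) to see that this sum survives only at $y=N/2-1-j$ and $y=N/2+j$. Your approach is cleaner and more conceptual for Lemma~\ref{Lemma1} itself: you identify the operator once and then do two lines of bit arithmetic. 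What the paper's approach buys is that the bit-level expression $\alpha_{x,y}$ is reused verbatim in Lemmas~\ref{Lemma2} and~\ref{Lemma3}, where single-bit perturbations of $z$ are analysed and the cancellation condition $(W_{N-1})_x=1$ falls out of the two summands of $\alpha_{x,y}$. Your permutation picture gives Lemma~\ref{Lemma1} faster, but you would still need to drop down to an amplitude-by-amplitude analysis when bit errors are introduced.
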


\begin{proof} In order to
simplify notation in the following, we suppress the superscript
$N$ in $W_j^{(N)}$.  With $W$ defined as the set of Hadamard
codewords given by Eq.~(\ref{Sec1EqnLogH}), the pair $(W,\oplus)$
forms a group under addition modulo two~\cite{Hor07}. In
particular, the identity element is $W_0$, and each element is its
own inverse since
\begin{align}
W_j \oplus W_j=W_0.\label{Sec2SubSA2WgrpIdent}
\end{align}
It follows from the group property that the addition of any two
codewords is another codeword. For the Hadamard codeword pairs
$W_{j}$ and $W_{N-1-j}$, it can be readily shown that
\begin{align}
W_{j}\oplus
W_{N-1-j}=W_{N-1}\label{eq:HadamardCodeWordSumProperty}
\end{align} for all $j
\in \mathbb{Z}_N$.

With some algebraic manipulation, the state
$\left|\Psi_3\right\rangle$ may be expressed as
\begin{align}
\left|\Psi_{3}\right\rangle=&\frac{1}{N}\sum_{y=0}^{N-1}\left(\sum_{x=0}^{N-1}\alpha_{x,y}\right)
\left|y\right\rangle.\label{Sec2LmA1b}
\end{align}
We use the qudit representation $|y\rangle$ rather than the spin
state representation $|m\rangle_s$. We translate back to spin
state representation as the last step.

The symbol
\begin{align}
\alpha_{x,y}=&(-1)^{\left(z \oplus
W_{y+N/2}\right)_x}+(-1)^{\left(z \oplus
W_{N/2-(y+1)}\right)_x},\label{Sec2LmA1c}
\end{align}
where $z=W_j$ is the string encoded in the oracle, and the symbol
$x$ represents the $x^{\rm{th}}$ bit of the $N$-bit strings. Note
that the sums~$y+N/2$ and~$N/2-(y+1)$ in Eq.~(\ref{Sec2LmA1c}) are
modulo~$N$ sums.

The Hadamard codewords are balanced with the exception of $W_0$,
which is constant. For $j\neq k$, this allows us to write
\begin{align}
0=&\sum_{x=0}^{N-1}(-1)^{\left(W_{j}\oplus
W_k\right)_x},\label{eq:Lemma1a0Sum}
\end{align} and for $j=k$,
\begin{align}
N=&\sum_{x=0}^{N-1}(-1)^{\left(W_{j}\oplus
W_j\right)_x},\label{eq:Lemma1a1Sum}
\end{align} where we have used the group inverse relation
given in Eq.~(\ref{Sec2SubSA2WgrpIdent}). Using this result, we
see that a non-zero sum of $\alpha_{x,y}$ in Eq.~(\ref{Sec2LmA1b})
occurs exactly twice when~$y=(N/2+j)\mod N$ and
when~$y=(N/2-1-j)\mod N$.

In the qudit representation, the output state is expressed
\begin{align}
\left|\Psi_3\right\rangle=\left|N/2-1-j\right\rangle+\left|N/2+j\right\rangle,
\end{align} where qudit indices are understood to be modulo~$N$. For $0\leq j <N/2$, this translates back to the spin
basis as
\begin{align}
\left|\Psi_3\right\rangle=\left|-1/2-j\right\rangle_s+\left|1/2+j\right\rangle_s,
\end{align} thus completing the proof of Lemma 1.
\end{proof}

We now show that the superposition of two states is also preserved
for codewords with restricted errors.

\begin{lemma}\sl\label{Lemma2}
Given the input
$\left|\Psi_0\right\rangle=\left|\frac{1}{2}\right\rangle_s+
\left|-\frac{1}{2}\right\rangle_s$ to the circuit shown in
Fig.~\ref{fig:SVDJCircuit} and the oracle encoded with $z \in
\tilde{\Xi}^{(N)}_{j,\,\mathbb{Z}_{N/4}}$ given by
Eq.~(\ref{Eq:restrictederrorsAll}), which is a codeword having
less than $N/4$ restricted errors, and with $0\leq j
<\frac{N}{2}$, the output state is another superposition of spin
states
$\left|\Psi_3\right\rangle=\left|\frac{1}{2}+j\right\rangle_s+
\left|-\frac{1}{2}-j\right\rangle_s$.
\end{lemma}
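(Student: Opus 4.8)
The plan is to reuse the general amplitude expression derived in the proof of Lemma~\ref{Lemma1}, which was obtained without ever assuming that $z$ is an exact codeword. Writing the encoded string as $z = W_j \oplus e$, where $e$ is a restricted error satisfying $e \preccurlyeq W_{N-1}^{(N)}$ and $|e| = d$ for some $0 \le d < N/4$, I would substitute into Eqs.~(\ref{Sec2LmA1b})--(\ref{Sec2LmA1c}) and show that the output amplitude at each $|y\rangle$ is insensitive to $e$. The goal is to prove that only the two qudits $y = N/2 + j$ and $y = N/2 - 1 - j$ survive, each with the same weight as in Lemma~\ref{Lemma1}, so that the conclusion $\left|\Psi_3\right\rangle = \left|\frac{1}{2} + j\right\rangle_s + \left|-\frac{1}{2} - j\right\rangle_s$ carries over verbatim.

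The first step is a structural observation: for fixed $y$ the two codeword indices entering $\alpha_{x,y}$, namely $a = (y + N/2)\bmod N$ and $b = (N/2 - 1 - y)\bmod N$, always satisfy $a + b \equiv N - 1$, so by Eq.~(\ref{eq:HadamardCodeWordSumProperty}) the codewords obey $W_a \oplus W_b = W_{N-1}$. Setting $u = W_j \oplus W_a$, which is again a codeword by the group property, the second codeword is forced to be $v = W_j \oplus W_b = u \oplus W_{N-1}$, and one has $z \oplus W_a = u \oplus e$ and $z \oplus W_b = v \oplus e$. I would then split $\sum_x \alpha_{x,y}$ over the $N/2$ bit positions $O$ where $W_{N-1}$ is one and the $N/2$ positions $Z$ where it is zero.

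The crux — the step I expect to do the real work — is the cancellation of the error. On $O$ the relation $u \oplus v = W_{N-1}$ forces $u_x \neq v_x$; since the common bit $e_x$ shifts both terms identically, $(u\oplus e)_x \neq (v\oplus e)_x$, so the two contributions $(-1)^{(u\oplus e)_x}$ and $(-1)^{(v\oplus e)_x}$ cancel pointwise for every $x \in O$, \emph{independently of $e$}. On $Z$ the domination hypothesis $e \preccurlyeq W_{N-1}$ guarantees $e_x = 0$ and $u_x = v_x$, so each $x \in Z$ contributes $2(-1)^{u_x}$. This yields the $e$-free identity $\sum_x \alpha_{x,y} = 2\sum_{x \in Z} (-1)^{u_x}$, valid for every $y$, which is precisely where the restricted bit errors wash out; the domination constraint is exactly what confines the errors to the self-cancelling positions $O$.

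It remains to evaluate $\sum_{x \in Z}(-1)^{u_x}$. Since the indicator of $Z$ is $\frac{1}{2}\left(1 + (-1)^{(W_{N-1})_x}\right)$, this sum equals $\frac{1}{2}\sum_x (-1)^{u_x} + \frac{1}{2}\sum_x (-1)^{v_x}$, and both terms vanish whenever $u$ and $v$ are balanced, i.e.\ for every $y$ except the two values $y = N/2 + j$ (where $u = W_0$) and $y = N/2 - 1 - j$ (where $v = W_0$, hence $u = W_{N-1}$), using that all codewords but $W_0$ are balanced as in Eq.~(\ref{eq:Lemma1a0Sum}). At those two exceptional values $u$ is constant on $Z$, giving $\sum_x \alpha_{x,y} = 2\cdot(N/2) = N$ and hence unit amplitude, exactly matching Lemma~\ref{Lemma1}. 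Translating the two surviving qudits back to the spin basis for $0 \le j < N/2$ then completes the proof.
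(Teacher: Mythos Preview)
Your argument is correct and follows essentially the same approach as the paper: both hinge on the observation that at every bit position $x$ with $(W_{N-1})_x=1$ the two summands in $\alpha_{x,y}$ are automatically opposite in sign and cancel, so restricted errors---which by definition live only at those positions---leave $\sum_x\alpha_{x,y}$ unchanged and Lemma~\ref{Lemma1} applies verbatim. Your write-up is more explicit (the $O/Z$ split and the indicator identity $\sum_{x\in Z}(-1)^{u_x}=\tfrac12\sum_x(-1)^{u_x}+\tfrac12\sum_x(-1)^{v_x}$ make the ``Lemma~1 still holds'' step fully quantitative), but the underlying idea is the same cancellation the paper invokes.
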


\begin{proof}

Observe that the form of Eq.~(\ref{Sec2LmA1c}) allows for
cancelling of errors. Under certain conditions if an error occurs
at bit position $x$, the effect on the left-hand side of the plus
sign is cancelled by the opposite effect on the right-hand side.
Consider the $x^{\rm th}$ bit error in Eq.~(\ref{Sec2LmA1c}),
where we have cancellation
\begin{align}
0=&(-1)^{\left(z \oplus W_{y+N/2} \right)_x}+(-1)^{\left(z \oplus
W_{N/2-(y+1)} \right)_x}.\label{Sec2LmA2a}
\end{align}
This identity implies that
\begin{align}
1&=\left(W_{y+N/2}\oplus W_{N/2-(y+1)}\right)_x \nonumber\\
&=(W_{N-1})_x,\label{Sec2LmA22}
\end{align}
where we have used the result expressed in
Eq.~(\ref{eq:HadamardCodeWordSumProperty}). This is exactly the
same as the requirement to be a member of set
$\tilde{\Xi}^{(N)}_{j,\,\mathbb{Z}_{N/4}}$ defined in
Eq.~(\ref{Eq:restrictederrorsAll}). Under this condition, the
result of Lemma 1 holds
and~$\left|\Psi_3\right\rangle=\left|\frac{1}{2}+j\right\rangle_s+
\left|-\frac{1}{2}-j\right\rangle_s$.\\\end{proof}

We now show that the perfect superposition of two states is no
longer preserved for codewords having unrestricted errors. The
effect of errors that are not of the restricted type is to degrade
the superposition by distributing amplitude evenly across all
other states. However, as long as the number of these errors is
less than $N/16$, it is still possible to efficiently identify the
desired state.

\begin{lemma}\sl\label{Lemma3}
Given the input
$\left|\Psi_0\right\rangle=\left|\frac{1}{2}\right\rangle_s+
\left|-\frac{1}{2}\right\rangle_s$ to the circuit shown in
Fig.~\ref{fig:SVDJCircuit} and the oracle encoded with $z \in
\Xi^{(N)}_{j,\,\mathbb{Z}_{N/16}}$ given by
Eq.~(\ref{Eq:unrestrictederrorsAll}), which is a codeword having
less than $N/16$ unrestricted errors, the desired state can be
identified with success probability of at least $\frac{9}{16}$.
\end{lemma}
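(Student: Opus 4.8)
The plan is to build directly on the amplitude formula already derived in the proof of Lemma~1, Eqs.~(\ref{Sec2LmA1b}) and~(\ref{Sec2LmA1c}), and to track how a genuine (unrestricted) error perturbs the two clean peaks. First I would write the oracle string as $z = W_j \oplus e$, where $e$ is an error pattern of Hamming weight $|e| = d < N/16$, and substitute into Eq.~(\ref{Sec2LmA1c}). Using the group property $W_j \oplus W_{y+N/2} = W_{\rho_1(y)}$ (and an analogous $\rho_2(y)$ for the second term), each inner sum over $x$ collapses to a single quantity $S(c) := \sum_{x=0}^{N-1}(-1)^{(W_c)_x}(-1)^{e_x}$, so that the output amplitude at qudit $|y\rangle$ is $\tfrac1N\big(S(\rho_1(y))+S(\rho_2(y))\big)$. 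Lemma~1 already locates the two peaks: $\rho_1(y)=0$ exactly at $y=N/2+j$ and $\rho_2(y)=0$ exactly at $y=N/2-1-j$.

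The quantitative heart of the argument is to control $S(c)$. I would write $(-1)^{e_x}=1-2e_x$ with $e_x\in\{0,1\}$ and split $S(c)$ into an ``all-ones'' part and an error part. By the balance and constancy of the Hadamard codewords already used in Eqs.~(\ref{eq:Lemma1a0Sum}) and~(\ref{eq:Lemma1a1Sum}), the all-ones part contributes $N$ when $c=0$ and $0$ otherwise, while the error part is a sum of $d$ signs. This yields $S(0)=N-2d$ for the clean peak, now eroded by the errors, and $|S(c)|\le 2d$ for every $c\ne 0$, which bounds the noise. Because $0\le j<N/2$ forces the peak locations $N/2+j$ and $N/2-1-j$ to be distinct, at each peak one contribution is the clean $S(0)=N-2d$ while the other is a noise term of magnitude at most $2d$.

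Combining these estimates, each desired amplitude satisfies $|\alpha_{N/2+j}|,|\alpha_{N/2-1-j}|\ge \tfrac1N\big((N-2d)-2d\big)=1-4d/N$, and the hypothesis $d<N/16$ gives $1-4d/N>3/4$. Since $|\Psi_3\rangle$ in this unnormalized convention carries total squared norm $2$, the probability of measuring one of the two desired states is $\tfrac12\big(|\alpha_{N/2+j}|^2+|\alpha_{N/2-1-j}|^2\big)>\tfrac12\big(\tfrac{9}{16}+\tfrac{9}{16}\big)=\tfrac{9}{16}$, establishing the claim.

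The hard part will be bookkeeping rather than analysis: I must confirm that at each peak the second (noise) contribution genuinely corresponds to some $c\ne0$ rather than accidentally coinciding with $c=0$, which is exactly where distinctness of the two peaks and the range $j<N/2$ enter, and I must verify that the worst-case sign alignment --- both noise terms taking their most destructive value $-2d$ --- is the configuration being bounded. The threshold $N/16$ is tightly calibrated: the factor $4$ in $1-4d/N$ comes from the factor $2$ in $(-1)^{e_x}=1-2e_x$ together with subtracting a noise term as large as the erosion of the clean peak, so any looser error budget would fail to clear $9/16$.
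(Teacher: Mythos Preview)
Your proposal is correct and follows essentially the same route as the paper. Both arguments substitute the perturbed string into the amplitude formula of Lemma~1, observe that the ``clean'' sum at $c=0$ drops from $N$ to $N-2d$ while every other sum is bounded in magnitude by $2d$, and conclude that each principal amplitude is at least $1-4d/N$, so that the combined probability exceeds $(3/4)^2=9/16$ once $d<N/16$.

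The presentation differs slightly: the paper treats a single non-restricted error first, reads off the $\pm 2$ perturbation in the balanced/constant sums, and then argues the worst case for $l$ errors is when none are of the restricted (self-cancelling) type; you instead write $(-1)^{e_x}=1-2e_x$ and bound $|S(c)|\le 2d$ uniformly, which absorbs the restricted/non-restricted distinction automatically. Your version is arguably tidier, but the underlying mechanism and the arithmetic leading to the $9/16$ threshold are identical.
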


\begin{proof}
Let $V_j\in\Xi^{(N)}_{j,1}$ be a Hadamard codeword with a single
unrestricted error. We have already shown that if the error is of
the restricted type, two-component superpositions are preserved.
If the the error is not a restricted error, we have no error
cancellation, so the single bit error breaks the balanced and
constant sums defined by Eqs.~(\ref{eq:Lemma1a0Sum})
and~(\ref{eq:Lemma1a1Sum}), respectively. For $j\neq k$ this gives
\begin{align}
2=&\sum_{x=0}^{N-1}(-1)^{\left(W_{j}\oplus
V_k\right)_x},\label{eq:Lemma3a0Sum}
\end{align} and for $j=k$,
\begin{align}
N-2=&\sum_{x=0}^{N-1}(-1)^{\left(W_{j}\oplus
V_j\right)_x}.\label{eq:Lemma3a1Sum}
\end{align} As the input state is a two-component superposition,
the above sums result in all the amplitudes of the output state
either acquiring or losing an amount of amplitude proportional to
four---two from the amount in the balanced or constant sums given
in Eqs.~(\ref{eq:Lemma3a0Sum}) and~(\ref{eq:Lemma3a1Sum}) and two
from the effect of there being two components in the input state.

We express the output state for the worst case of a single
unrestricted error as
\begin{align}
\left|\Psi_3\right\rangle=&\frac{1}{\sqrt{2}}\left(1-\frac{4}{N}\right)\left(\left|\frac{1}{2}+j\right\rangle_s+
\left|-\frac{1}{2}-j\right\rangle_s\right)\nonumber\\
&+\frac{4}{\sqrt{2}N}\sum_{\stackrel{k=-s-1/2}{k\neq \pm
j}}^{k=s-1/2}\pm\left|\frac{1}{2}+k\right\rangle_s.
\end{align}
For the worst case of $l$ unrestricted errors, where no errors are
of the restricted type, the principle components have amplitude
\begin{align}\label{eq:compampprinciple}
\alpha=\frac{1}{\sqrt{2}}\left(1-\frac{4l}{N}\right),
\end{align} and the amplitude of the next largest component is
\begin{align}\label{eq:compampnext}
\beta=\frac{4l}{\sqrt{2}N}.
\end{align}
The amplitude reduction of the principle components by an amount
directly proportional to the number of errors results from the
constant sum given by Eq.~(\ref{eq:Lemma3a1Sum}) being reduced by
double the number of errors. However, the balanced sums are
variable since errors can cancel. The worst case occurs when the
errors are `in phase' resulting in the amplitude of the next
largest component being proportional to the number of errors. The
effect of codewords with unrestricted errors on the input
superposition is presented in Fig.~\ref{fig:comperrors}.

\begin{figure}[tbp]
            \begin{center}
            \includegraphics[width=9 cm]{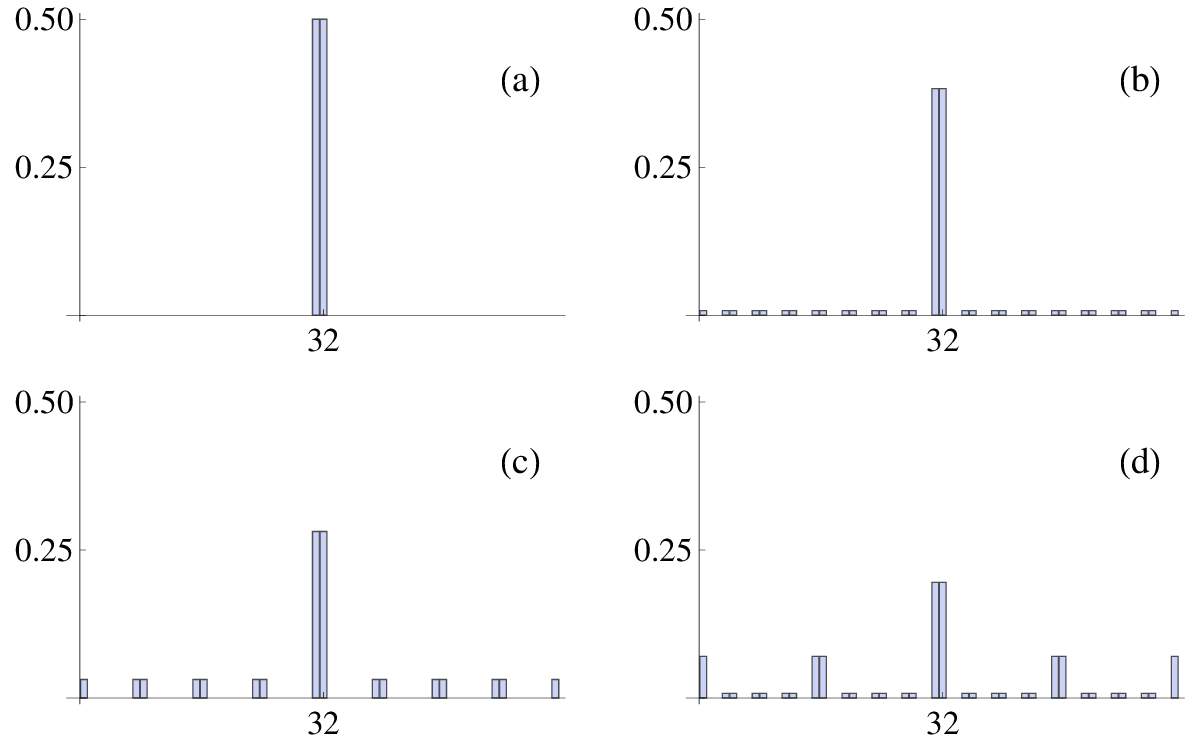}
            \end{center}
            \caption{For $N=64$, the effect of $l$
            unrestricted errors, where none of the $l$ errors are of the restricted type, on the probability of
            the central components is demonstrated for
            (a) $l=0$, (b) $l=2$, (c) $l=4$, and (d) $l=6$. Fig.
            7(c) corresponds to the case that $l=N/16$. }\label{fig:comperrors}
\end{figure}

The probability of the two central components is
\begin{align}
\left|\alpha\right|^2\geq\frac{1}{2}\left(1-\frac{8l}{N}+\frac{16l^2}{N^2}\right).
\end{align} The equality holds for the worst case where none of the errors are of the restricted type.
If $l=N/16$, then
$\left|\alpha\right|^2\geq\frac{1}{2}\left(\frac{9}{16}\right)$.
The amplitudes of the two central states can be combined into a
single state with amplitude $\sqrt{2}\alpha$ by an appropriate
unitary operation. Since $l$ is less than $N/16$, the desired
state can be identified with probability~$2\left|\alpha\right|^2$,
which is at least~$\frac{9}{16}$.
\end{proof}

We use the results of Lemmas 1, 2 and 3 to show that
Claims~\ref{restrictedclaim} and~\ref{unrestrictedclaim} are true.\\
\\
\textbf{Proof of Claim~\ref{restrictedclaim}:}\\

By Lemmas~1 and~2,
$\left|\Psi_3\right\rangle=\left|\frac{1}{2}+j\right\rangle_s+
\left|-\frac{1}{2}-j\right\rangle_s$ for
$z\in\tilde{\Xi}^{(N)}_{j,\,\mathbb{Z}_{N/4}}$. Immediately prior
to the measurement step, we require a unitary operator
$\op{U}_{2\mapsto1}$ that maps this superposition of two states
into a single basis state such that
\begin{align}\label{eg:2to1Op}
\op{U}_{2\mapsto1}\left|\Psi_3\right\rangle=\left|\frac{1}{2}+j\right\rangle_s.
\end{align} Since we know that the unknown string $z$ is either in set $\tilde{\Xi}_{N/2-1}^{(N)}$ or in $\tilde{\Xi}_{k}^{(N)}$,
we wish to measure the outcome of the qudit $|s\rangle_s$ in the
spin basis. We define the projection operator~\cite{NC00}
\begin{align}\label{eq:Ms}
M_s=\left|s\right\rangle_s\left\langle s\right|,
\end{align}
and outcome probability is
\begin{align}
\Pr[s]=\left\langle\frac{1}{2}+j\right|_s
M_s\left|\frac{1}{2}+j\right\rangle_s.
\end{align}
If $j=N/2-1$, then $\Pr[s]=1$ and $z\in A$, and if $j\neq N/2-1$,
then $\Pr[s]=0$ and $z\in B$. Thus, the restricted close Hadamard
problem is solved with certainty in a single oracle query. \hfill
$\square$
\\
 \\
\textbf{Proof of Claim~\ref{unrestrictedclaim}:}\\

By Lemmas~1 and~2,
$\left|\Psi_3\right\rangle=\left|\frac{1}{2}+j\right\rangle_s+
\left|-\frac{1}{2}-j\right\rangle_s$ for
$z\in\tilde{\Xi}^{(N)}_{j,\,\mathbb{Z}_{N/4}}$. By Lemma~3, the
effect of including $l$ unrestricted bit errors on the output
state may be expressed as
\begin{align}
\left|\Psi_3\right\rangle=&\alpha\left(\left|\frac{1}{2}+j\right\rangle_s+
\left|-\frac{1}{2}-j\right\rangle_s\right)\nonumber\\
&+\beta\sum_{\kappa}\pm\left|\frac{1}{2}+k\right\rangle_s+\gamma\sum_{\lambda}\pm\left|\frac{1}{2}+k\right\rangle_s,
\end{align} where the symbols $\alpha$ and $\beta$ are given by Eqs.~(\ref{eq:compampprinciple})
and~(\ref{eq:compampnext}) respectively, and $|\gamma|<|\beta|$.
Note that $\kappa+\lambda=N-2$, so that all $N$ possible states
are accounted for, but the specific values of $\gamma$, $\kappa$
and $\lambda$ are dependent on $N$ and $l$. The outcome
probabilities of measuring the state $|s\rangle_s$ are
\begin{align}
\Pr[s]=\left\langle\frac{1}{2}+j\right|_s
M_s\left|\frac{1}{2}+j\right\rangle_s.
\end{align}
If $j=N/2-1$, then $\Pr[s]>\left|\alpha\right|^2$, and if $j\neq
N/2-1$, then $\Pr[s]<\left|\beta\right|^2$. Assuming that the
number of unrestricted errors $l$ is less than $N/16$, then an
error of $O(e^{-q})$ can be achieved by making $O(q)$ repetitions
of the algorithm~\cite{AHS09}. Thus, the unrestricted close
Hadamard problem is solved with arbitrarily small error
probability in a constant number of queries. \hfill $\square$\\
\\
\textbf{Proof of Theorem~\ref{CHPinSS}:}\\

In the restricted case, the algorithm output error probability is
only due to the error in the input distribution given by
Eq.~(\ref{Eq:CentralProbBound}). In the unrestricted case, the
algorithm output error probability is due to both the error in the
input distribution and the inability of the algorithm to
accurately distinguish output states when the codeword errors are
unrestricted.

For the restricted case, the total success probability is
\begin{align}
\Pr_{\checkmark}=\mathcal{P}_c^{(N)}\Pr[s]_{\text{\tiny{restricted}}}>0.96.
\end{align} Similarly for unrestricted case, the total success
probability is
\begin{align}
\Pr_{\checkmark}=\mathcal{P}_c^{(N)}\Pr[s]_{\text{\tiny{unrestricted}}}>0.54.
\end{align} Since for each of these cases $\Pr_{\checkmark}>0.5$, then an
error of $O(e^{-q})$ can be achieved by making $O(q)$ repetitions
of the algorithm~\cite{AHS09}. This completes the proof of
Theorem~\ref{CHPinSS}.\hfill $\square$

\subsection{Classical Algorithm}

In this subsection we compare the performance of any classical
algorithm to the performance of the quantum algorithm.

\begin{claim}\label{classicalclaim}Any classical
deterministic algorithm requires~$\Omega(n)$ oracle queries of the
bit positions to solve the close Hadamard problem with certainty,
\textnormal{even} if there are no bit errors. A randomized
algorithm with bounded error probability also requires~$\Omega(n)$
queries, \textnormal{even} if there are no bit errors.
\end{claim}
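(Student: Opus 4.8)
The plan is to strip the problem to its linear-algebraic core and argue by an adversary/information argument. First I would invoke the ``no bit errors'' special case, so that the hidden string is exactly one Hadamard codeword $z=W_j^{(N)}$; since the no-error instances form a sub-family of the full instance set, any lower bound proved here holds \emph{a fortiori} for the general problem with errors. Reading off the rule $W_j^{(N)}[x]=\langle j,x\rangle \bmod 2$ (the $\mathbb{F}_2^n$ inner product, as the $\op{W}^{(4)}$ example confirms), a single bit query at position $x$ returns the value of one $\mathbb{F}_2$-linear functional $j\mapsto\langle j,x\rangle$ of the unknown index $j$. Because Problems~\ref{restrictedCHP} and~\ref{unrestrictedCHP} restrict to $j\in\mathbb{Z}_{N/2}$ (codewords whose top index bit vanishes), the candidate indices form an $(n-1)$-dimensional space over $\mathbb{F}_2$, the target $j^\star=N/2-1$ is the all-ones vector of that space, and the decision ``$z\in A$ versus $z\in B$'' is exactly ``is $j=j^\star$ or $j\neq j^\star$?''.

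For the deterministic, zero-error bound I would run the standard adversary: let the oracle always answer $b_x=\langle j^\star,x\rangle$, which is consistent with the target codeword throughout. After $q$ queries at positions $x_1,\dots,x_q$ the set of still-consistent indices is the coset $\{\,j:\langle j\oplus j^\star,x_i\rangle=0\ \forall i\,\}$, whose dimension is at least $(n-1)-q$. Hence if $q<n-1$ this set has size at least two, so it contains some $j'\neq j^\star$, and the transcript is simultaneously consistent with the $A$-input $W_{j^\star}$ and the $B$-input $W_{j'}$. Whatever the algorithm outputs, the adversary declares the opposite input, so no algorithm can decide with certainty before $q\ge n-1=\Omega(n)$ queries. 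This settles the first sentence of the claim.

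For the randomized, bounded-error bound the intended route is Yao's minimax principle: exhibit an input distribution under which every deterministic $q$-query procedure errs with constant probability unless $q=\Omega(n)$, then conclude the same for randomized bounded-error algorithms. The natural candidate is the balanced mixture placing half its mass on $j=j^\star$ and spreading the other half over $j\neq j^\star$, with the distributional error lower-bounded by the chance that the surviving coset above still straddles both hypotheses. The hard part will be extracting a \emph{constant} error term this way: because any two Hadamard codewords lie at Hamming distance exactly $N/2$, a single well-chosen position already disagrees with $W_{j^\star}$ on half of the competing codewords, so a naive ``still-ambiguous-coset'' count yields only an exponentially small error, not a constant. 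Making the randomized statement go through therefore requires weighting the $B$-mass toward indices $j'$ whose difference $j'\oplus j^\star$ evades the queried functionals, or otherwise arranging that confirming $A$ genuinely forces $\Omega(n)$ linearly independent queries; pinning down such a distribution and verifying the constant-error guarantee it yields is, I expect, the crux (and the most delicate point) of the whole argument.
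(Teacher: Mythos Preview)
Your deterministic argument is correct. The paper's own justification is a brief information-theoretic sketch: there are $N/2=2^{n-1}$ candidate codewords, a classical query returns one bit and so can at most halve the surviving candidates, hence $\Omega(n)$ queries are needed to isolate the codeword. Your adversary argument is a cleaner route to the \emph{decision} bound: by keeping all answers consistent with $j^\star$ and tracking the surviving affine subspace, you show the transcript remains compatible with both an $A$-input and a $B$-input until $q\ge n-1$. The two arguments coincide numerically (each query halves the consistent set), but yours actually addresses the decision problem as stated, whereas the paper argues for identification.

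For the randomized part you have correctly put your finger on a genuine obstruction, and it is not one that can be repaired: the randomized $\Omega(n)$ bound as stated fails for the no-error decision problem. Query a uniformly random position $x$ and output ``$A$'' iff $z_x=(W_{j^\star})_x$. This is always right when $j=j^\star$, and when $j\neq j^\star$ it errs with probability exactly $1/2$, since any two distinct Hadamard codewords differ in exactly $N/2$ of the $N$ positions. A constant number of independent repetitions drives the error below any fixed constant, so the no-error decision problem has randomized query complexity $O(1)$. The paper offers no separate randomized argument; it implicitly reuses the identification bound, which does need $\Omega(n)$ bits, but deciding ``$W_{N/2-1}$ versus something else'' is strictly easier than identifying $j$. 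Your instinct that the naive Yao distribution yields only exponentially small error is therefore exactly right, and no choice of hard distribution will rescue the claim as written.
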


Claim~\ref{classicalclaim} follows from information theoretical
considerations. The goal of the classical strategy is to determine
which of the $N/2$ Hadamard codewords is loaded into the oracle.
The number of possible solutions is then initially $\Omega(2^n)$.
Whenever a classical strategy performs a query, it can eliminate
at most half of the remaining possible solutions, even if there
are no errors. To reduce the number of possible solutions to a
single solution, the classical strategy therefore requires at
least $\Omega(n)$ queries\footnote{See for example paragraph 6.1
in~\cite{Ata98} for an introduction to information theoretic lower
bounds.}. The lower bound also holds when the strings loaded into
the oracle are Hadamard codewords with errors.

In the next section we discuss how changing the unitary operators
$\op{R}$ and $\op{R}^\dag$ in the algorithm shown in
Fig.~\ref{fig:SVDJCircuit} changes the oracle decision problem
that can be solved.

\section{Alternative Algorithm}

The continuously-parameterized, finite-dimensional Hilbert space
of the spin system inspired an efficient algorithm for the
solution of the close Hadamard problem. The group structure of the
Hadamard codewords is implicit in the use of Hadamard operators in
the quantum algorithm. We now show that this computation model can
inspire other algorithms.  Other unitary operators can be employed
in the quantum circuit shown in Fig.~\ref{fig:SVDJCircuit}. The
discrete Fourier transform \cite{NC00} is an obvious alternative.
We provide a sketch of how the Fourier transform changes the group
structure of the codewords and point to the need for further
exploration of problems that could benefit from this computational
model.

We replace the  operators $\op{R}$ and $\op{R}^\dag$ in
Fig.~\ref{fig:SVDJCircuit} with the discrete Fourier transform
$\op{F}$ and $\op{F}^\dag$. The matrix representation of the
discrete Fourier transform is expressed as
\begin{equation}\label{matrixF-N}
\op{F}^{(N)}=\frac{1}{\sqrt{N}} \left(\begin{array}{cccc}
1         & 1                 &\cdots           &1       \\
1         & \omega            &\cdots        &\omega^{N-1}\\
\vdots    &\vdots             &\ddots         &\vdots     \\
1         &\omega^{(N-1)}    &\cdots     &\omega^{(N-1)(N-1)}\\
\end{array} \right),
\end{equation}
where $\omega=e^{\frac{\rm{i}\pi}{N}}$~\cite{NC00}.

In our analysis of the $\op{F}$-based algorithm, we adopt a
similar approach to that taken for the $\op{H}$-based algorithm
and define the `Fourier codewords' as
\begin{align}
\op{T}^{(N)}=\log_{(-1)}\left[\sqrt{N}\op{F}^{(N)}\right].
\label{Sec1EqnLogF}
\end{align}
Similar to the Hadamard codewords, the $j^{\rm th}$ Fourier
codeword is the $j^{\rm th}$ row of the matrix $\op{T}^{(N)}$. As
an example, we express the $N=8$ matrix as
\begin{align}
\op{T}^{(8)}=\left(
\begin{array}{cccccccc}
 0 & 0 & 0 & 0 & 0 & 0 & 0 & 0 \\
 0 & \frac{1}{4} & \frac{1}{2} & \frac{3}{4} & 1 & -\frac{3}{4} & -\frac{1}{2} & -\frac{1}{4}
   \\
 0 & \frac{1}{2} & 1 & -\frac{1}{2} & 0 & \frac{1}{2} & 1 & -\frac{1}{2} \\
 0 & \frac{3}{4} & -\frac{1}{2} & \frac{1}{4} & 1 & -\frac{1}{4} & \frac{1}{2} & -\frac{3}{4}
   \\
 0 & 1 & 0 & 1 & 0 & 1 & 0 & 1 \\
 0 & -\frac{3}{4} & \frac{1}{2} & -\frac{1}{4} & 1 & \frac{1}{4} & -\frac{1}{2} & \frac{3}{4}
   \\
 0 & -\frac{1}{2} & 1 & \frac{1}{2} & 0 & -\frac{1}{2} & 1 & \frac{1}{2} \\
 0 & -\frac{1}{4} & -\frac{1}{2} & -\frac{3}{4} & 1 & \frac{3}{4} & \frac{1}{2} & \frac{1}{4}
\end{array}
\right),\label{eqn:FourierCodeT8}
\end{align} with $T^{(8)}_4=01010101$. We see that the Fourier codewords are not bit strings
but rather can be thought of as fractional bits. These fractional
bits can still be encoded into the oracle function $\op{U}_z$
given in Eq.~(\ref{Eq:Oracle_Uf}).

We define what we term the \emph{simple Fourier codeword} oracle
decision problem and show that it can be solved in a single query
using the modified algorithm. Note that we have structured this
problem along the same lines as the close Hadamard problem with no
errors.
\begin{problem}\sl\label{FMAN}
Given the string $\check{A}=T^{(N)}_{N/2-1}$ and a set of strings
$\check{B}=T^{(N)}_{k}$,  with $k \in \left\{\mathbb{Z}_{N}\mid
k\neq N/2-1\right\}$ and a string $z$ randomly selected with
uniform distribution~$\mu$ such that $z\in_\mu \check{C}=\check{A}
\cup \check{B}$, the \textbf{Simple Fourier Codeword} problem is
to determine whether $z\in \check{A}$ or $z\in \check{B}$ with the
fewest oracle queries.
\end{problem}

The action of the algorithm on the input state is expressed as
\begin{align}
\left|\Psi_{3}\right\rangle=&{\op{F}^{\dag}}^{(N)}\op{U}_z
\op{F}^{(N)}\left(\left|\frac{1}{2}\right\rangle_s+
\left|-\frac{1}{2}\right\rangle_s\right).\label{eq:FAction}
\end{align} For $z=T^{(N)}_j$ and $j\in 0,1,\ldots,N-1$, the output state can be shown to be
\begin{align}
\left|\Psi_{3}\right\rangle=&\left|\frac{1}{2}+j\right\rangle_s+
\left|-\frac{1}{2}+j\right\rangle_s,\label{eq:FActionResult}
\end{align} where $\frac{1}{2}+j$ and $-\frac{1}{2}+j$ are modulo
$s$ sums in the sense that
$\left|\frac{1}{2}+\frac{N}{2}\right\rangle=\left|-s\right\rangle$.

We apply $U_{2\mapsto1}$ given in Eq.~(\ref{eg:2to1Op}) to the
state $\left|\Psi_{3}\right\rangle$ given in
Eq.~(\ref{eq:FActionResult}). Measuring the qudit $|s\rangle_s$ in
the spin basis with the measurement operator $M_s$ given in
Eq.~(\ref{eq:Ms}), distinguishes whether the encoded string is in
set $\check{A}$ or set $\check{B}$ thereby solving the simple
Fourier codeword problem in a single query.

The result given by Eq.~(\ref{eq:FActionResult}) is achieved by
exploiting group properties similar to those expressed in
Eqs.~(\ref{Sec2SubSA2WgrpIdent})
and~(\ref{eq:HadamardCodeWordSumProperty}) for the Hadamard
codewords. The columns of $\op{F}^{(N)}$ represent the
\emph{multiplicative} cyclic group of order $N$, where the
generator is the first non-trivial column of $\op{F}^{(N)}$. The
matrix of codewords $\op{T}^{(N)}$ represents the \emph{additive}
cyclic group of order $N$ as a result of taking the logarithm
of~$\op{F}^{(N)}$.

Each element of the group has the inverse relation
\begin{align}
T_j + T_{N-j}=T_0,\label{eq:TgrpIdent}
\end{align}
and each codeword also obeys the sum relation
\begin{align}
T_{j} + T_{N/2-j}=T_{N/2},\label{eq:TCodeWordSumProperty}
\end{align} where $N-j$ and $N/2-j$ are understood to be modulo
$N$ sums. In Fig.~\ref{fig:Fprobdist} we clearly see that, unlike,
the Hadamard codewords that preserve the superposition of two
symmetric states, the Fourier codewords preserve the superposition
of two adjacent states.
\begin{figure}[tbp]
            \begin{center}
            \includegraphics[width=9 cm]{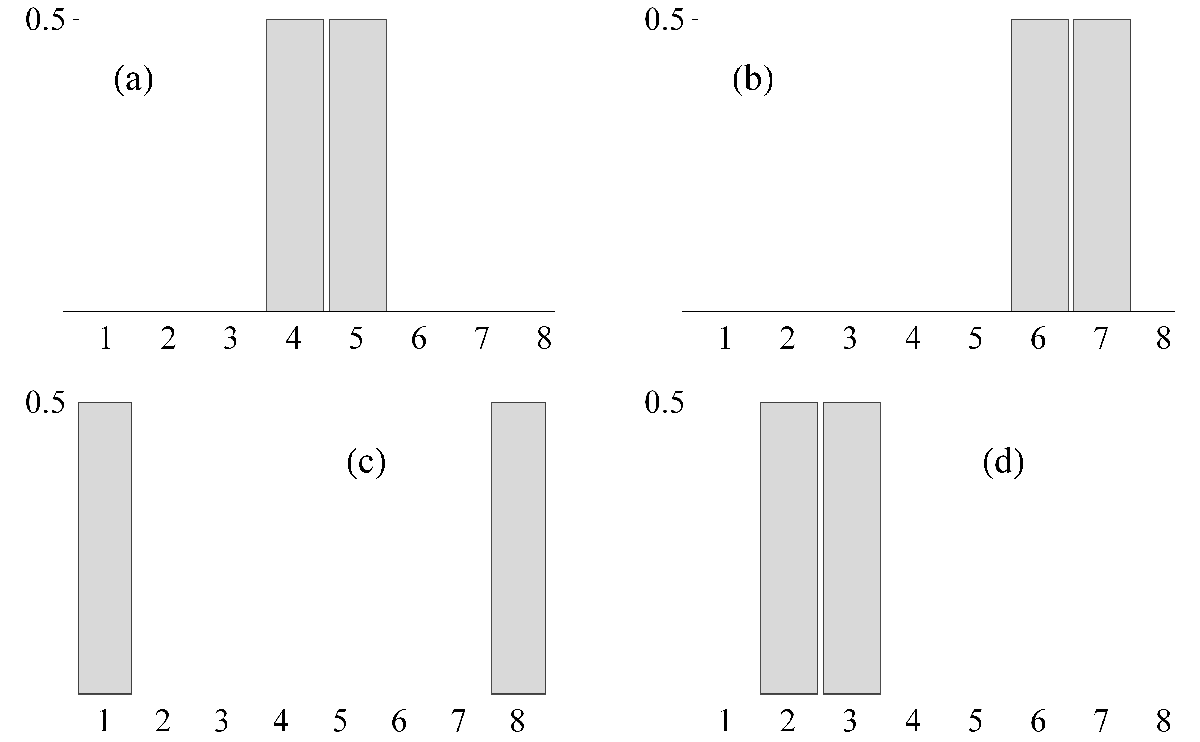}
            \end{center}
            \caption{For $s=7/2$, the Fourier-based algorithm probability distributions for the states given by
            Eq.~(\ref{eq:FActionResult}) for
            (a) $j=0$, (b) $j=2$ (c) $j=4$ and (d) $j=6$. The
            $\op{F}$-based algorithm preserves the `adjacency' of the input
            superposition, whereas the $\op{H}$-based algorithm preserves
            the `mirror' symmetry of the input superposition as
            shown in Fig.~\ref{Fig_for_Lemma1}.
            } \label{fig:Fprobdist}
\end{figure}

Comparison of the effect of the different operators is
interesting. The Hadamard codewords preserve symmetric
superpositions, and there are $N/2$ unique symmetric
superpositions. The Fourier codewords preserve adjacent
superpositions, and there are $N$ unique adjacent superpositions.

The exploration of the structure of error cancellations along the
lines of Eq.~(\ref{Sec2LmA22}) using the relationship
 Eq.~(\ref{eq:TCodeWordSumProperty}) for the Fourier codewords may lead to
new problems that can be efficiently solved using this model of
computation. For example, it is natural to apply this error
cancellation concept to the simple Fourier codeword oracle
decision problem presented in Problem \ref{FMAN} so that the
equivalent of Fourier codewords with errors may be included.

\section{Conclusions}

We have presented a model of continuous variable quantum
computation based on the continuously-parameterized yet
finite-dimensional Hilbert space of a spin system. Like continuous
variable quantum computation using the states of the harmonic
oscillator, this spin system is amenable to physical preparation
with linear rotation and quadratic squeezing operators. Unlike the
harmonic oscillator case where an arbitrary degree of squeezing
may be achieved, spin squeezing occurs in a spherical phase space,
and the minimum attainable variance of a spin component
asymptotically approaches the Heisenburg limit of
one-half~\cite{KU93}.

We have proposed a pilot oracle decision problem called the close
Hadamard problem to demonstrate the type of problem that is
amenable to solution using this computational model. We have used
a superposition of two discrete states to approximate the
optimally squeezed coherent spin state as input into a quantum
algorithm presented in Figure~\ref{fig:SVDJCircuit}, which is
adapted from the continuously-parameterized infinite-dimensional
single mode algorithm~\cite{AHS09,AHS12}.

We have shown that this model of computation solves the close
Hadamard problem with arbitrarily small error probability in a
constant number of oracle queries. The combination of the model
and the algorithm taken together describe a new model of quantum
computation.  Furthermore, the two versions of the close Hadamard
problem hi-light an interesting feature of the model.

The tolerance of errors observed in the restricted case is due to
error cancellation, which results from employing the symmetric
superposition of spin states as algorithm input combined with the
group structure of the Hadamard codewords and the employment of
Hadamard operators in the algorithm. This relationship between the
structure of the information associated with the problem and the
structure of the unitary operators employed in the quantum
algorithm is what enables us to solve the restricted problem more
efficiently in this computational model. It may prove fruitful to
explore the relationship between the error cancellation observed
here and how the perturbations related to Grover's problem
mutually cancel in arbitrarily high-dimensional search
spaces~\cite{Jin15}.

We have further explored the relationship between operators and
codewords and have shown that it can inspire other algorithms. We
have shown that using the discrete Fourier transform as an
alternative to the Hadamard operator changes the group structure
of the codewords.  This in turns indicates that further
exploration may lead to the discovery of new problems that would
benefit from this computational model. We conclude that the
continuously-parameterized representation of quantum dynamical
systems having a finite-dimensional Hilbert space gives us a new
model of quantum computation that is worthy
of further exploration.\\
\section*{Acknowledgements}

We appreciate financial support from the Alberta Ingenuity Fund
(AIF), Alberta Innovates Technology Futures (AITF), Canada's
Natural Sciences and Engineering Research Council (NSERC), the
Canadian Network Centres of Excellence for Mathematics of
Information Technology and Complex Systems (MITACS), and the
Canadian Institute for Advanced Research (CIFAR).

\bibliographystyle{aiaa-doi}

\end{document}